\DeclareMathOperator{\Tr}{Tr}
\DeclareMathOperator{\Var}{Var}
\newtheorem{theorem}{Theorem}
\newtheorem{lemma}{Lemma}
\newtheorem{corollary}{Corollary}
\begin{document}

\title{Work statistics of sudden Quantum quenches: A random matrix theory perspective on Gaussianity and its deviations.}
\author{Miguel Tierz}\email{tierz@simis.cn}
%\email{...}
\affiliation{Shanghai Institute for Mathematics and Interdisciplinary Sciences\\ Block A, International Innovation Plaza, No. 657 Songhu Road, Yangpu District,\\ Shanghai, China}
%\author{(Draft for discussion)}
\date{\today}

\begin{abstract}
We show that, for sudden quenches, the work distribution reduces to the statistics of the traces of powers of Haar unitaries, which are random unitary matrices drawn uniformly from the unitary group. For translation–invariant quadratic fermionic chains with interactions extending to $m$ neighbors and periodic boundary conditions, the Loschmidt amplitude admits a unitary matrix model / Toeplitz representation, which yields a work variable of the form $W=\sum_{r\le m} a_r\,\Re\Tr U^r$ (and in models with pairing terms -superconducting pairing- additional $b_r\,\Im\Tr U^r$ terms appear). By invoking multivariate central limit theorems for vectors of traces of unitaries, we obtain a Gaussian distribution for $P(W)$ with variance $\mathrm{Var}(W)=\tfrac12\sum_r r\,(a_r^2+b_r^2)$ and asymptotic independence across different powers. We also characterise the conditions under which non-Gaussian tails arise, for example from many interaction terms or their slow decay, as well as the appearance of Fisher–Hartwig singularities. We illustrate these mechanisms in the XY chain. Various numerical diagnostics support the analytical results.

\end{abstract}

\maketitle

\section{Introduction}

Understanding the statistics of \emph{work} performed during rapid nonequilibrium protocols has become a central theme in quantum thermodynamics \cite{deffner2019quantum}. 
To define work we consider a protocol in which external controls $\lambda(t)$ are varied during a time window $t\in[0,\tau]$.
The system Hamiltonian is
\begin{equation}
H(t)\equiv H(\lambda(t)),\qquad
H_i:=H(\lambda(0^-)),\quad H_f:=H(\lambda(\tau^+)).
\end{equation}
The term \emph{drive} refers to the (possibly instantaneous) time dependence $t\mapsto \lambda(t)$ on $[0,\tau]$, which generates the unitary
\begin{equation}
U_\tau=\mathcal T\exp\!\Big(-i\int_0^\tau H(\lambda(t))\,dt\Big).
\end{equation}

In the two-projective-measurement (TPM) scheme \cite{deffner2019quantum}, one measures $H_i$ at $t=0^-$, evolves with $U_\tau$, and then measures $H_f$ at $t=\tau^+$.
If $H_i|n^0\rangle=E_n^{(0)}|n^0\rangle$ and $H_f|m^1\rangle=E_m^{(1)}|m^1\rangle$, the work random variable is
\begin{equation}
w=E_m^{(f)}-E_n^{(i)},
\end{equation}
with probabilities $p_n^{(0)}=\langle n^0|\rho_0|n^0\rangle$ and $p_{m|n}=|\langle m^1|U_\tau|n^0\rangle|^2$.
The characteristic function of work,
\begin{equation}
\chi_W(u)=\int_{\mathbb R}e^{iu w}P(w)\,dw,
\end{equation}
has the standard operator form
\begin{equation}\label{eq:chi-general}
\chi_W(u)=\Tr\!\Big(e^{iu H_f}\,U_\tau\,e^{-iu H_i}\,\rho_0\,U_\tau^\dagger\Big).
\end{equation}

In a \emph{sudden quench} \cite{Gorin2006} the drive is instantaneous at $t=0$:
we set $\tau=0^+$ and evolve for a time $t$ under $H_f$.
If the prepared state is an eigenstate $H_i|\psi_0\rangle=E_0|\psi_0\rangle$ (the usual situation in our applications; e.g.\ a translation-invariant ground state, or a domain-wall state chosen as the ground state of a simple $H_i$), then \eqref{eq:chi-general} reduces to
\begin{equation}\label{eq:chi-G}
\chi_W(u)=e^{-iuE_0}\,\langle\psi_0|e^{iu H_f}|\psi_0\rangle
= e^{-iuE_0}\,G(-u),
\end{equation}
where
\begin{equation}
G(t):=\langle\psi_0|e^{-i H_f t}|\psi_0\rangle
\end{equation}
is the \emph{Loschmidt amplitude}, and $L(t)=|G(t)|^2$ the Loschmidt echo \cite{Gorin2006}. In the quadratic fermion models of interest, the problem can be mapped to a unitary matrix integral \cite{perez2024hawking,perez2024dynamical}. Specifically, we will see that $G(t)$ takes the form of a Toeplitz determinant, allowing us to interpret $\chi_W(u)$ as a characteristic function of a random-matrix linear statistic. Toeplitz matrices are structured matrices, constant along diagonals, whose determinant is well-known to be equivalent to partition functions of unitary random matrix ensembles \cite{forrester2010log,garcia2020toeplitz}.
Equation~\eqref{eq:chi-G} is the precise content of the heuristic statement “the work characteristic function is the Loschmidt amplitude (up to a phase)” used throughout.

 Thus $P(w)$ is the Fourier transform of a Loschmidt-type object, while the familiar Loschmidt echo $L(t)=|G(t)|^2$ quantifies the decay of fidelity after the quench~\cite{Talkner2007,Campisi2011,Esposito2009,Silva2008,Gorin2006,Heyl2013}. This Loschmidt amplitude and work statistics connection has driven much of the modern activity on \emph{dynamical quantum phase transitions} (DQPTs), where nonanalytic features in $G(t)$ govern far-from-equilibrium behavior~\cite{Heyl2013,Heyl2018}. On the experimental side, interferometric protocols access $\chi(u)$ directly via an ancilla, enabling reconstructions of $P(w)$ and tests of fluctuation relations in a variety of platforms~\cite{Dorner2013,Mazzola2013,Batalhao2014}.\\

\smallskip
%\noindent\textbf{This work: from work statistics to traces of random unitaries.} 
The key structural observation we develop and exploit is that, for sudden quenches in quadratic (or effectively free) spin or fermionic models with periodic boundary conditions (PBC) and finite-range couplings\footnote{We illustrate with finite range in formulas for brevity; the Gaussian core does not rely on this and extends to infinite-range symbols under $\sum_{r\ge1} r(a_r^2+b_r^2)<\infty$, or to truncations with $m(N)=o(N^{2/3})$, cf.\ Sec.~IV.}, the Loschmidt amplitude admits a Toeplitz (or block-Toeplitz) determinant representation whose symbol is $e^{i u \varepsilon(k)}$ \cite{perez2024dynamical,perez2024hawking}, where $\varepsilon(k)$ is the single-particle dispersion of the post-quench Hamiltonian, expressed after the appropriate mode rotation when needed. This actually means the Loschmidt amplitude can be written as $\det[T_N(\phi)]$ for some function (symbol) $\phi(\theta)=e^{iu\epsilon(\theta)}$, whose Fourier coefficients are the elements of the structured matrix, constant along diagonals, $T_N(\phi)$.
Expanding $\varepsilon(k)$ in Fourier modes,
\begin{equation}\label{modes}
\varepsilon(k)=a_0 + \sum_{m\ge 1}\big(a_m \cos mk + b_m \sin mk\big),
\end{equation}
and using Heine–Szegő–type identities \cite{forrester2010log}, to be described below, the characteristic function can be written as an average over the unitary group,
%\begin{equation}
%\chi(u)=\mathbb{E}_{U\sim \mathrm{CUE}_N}\!\left[\exp\!\Big\{ i u\, \sum_{m\ge 1} \big(\alpha_m\,\Tr %U^m + \bar{\alpha}_m\,\Tr U^{-m}\big) \Big\}\right],
%\label{eq:chi-linear-stat}
%\end{equation}
\begin{equation}
\chi(u)=\mathbb{E}_{U(N)}\!\left[\exp\!\left(
iu\sum_{m\ge1}\bigl(\alpha_m \mathrm{Tr}\,U^m+\alpha_m^{\ast}\mathrm{Tr}\,U^{-m}\bigr)
\right)\right].
\label{eq:chi-linear-stat}
\end{equation}
with explicit coefficients $\alpha_m$ determined by the quench and microscopic couplings.\\ 

In other words, \emph{the logarithm of the characteristic function is the cumulant generating functional of a linear statistic of traces of powers of a Haar-distributed unitary}. This places $P(w)$ squarely in the scope of classical and influential probabilistic results on traces of random unitary matrices: for fixed $m$ as $N\to\infty$, the variables $\{\Tr U^m\}_{m\ge 1}$ converge jointly to independent complex Gaussians with
\begin{equation}
\mathbb{E}[\Tr U^m]=0,\qquad \mathrm{Var}(\Tr U^m)=m,
\end{equation}
and asymptotic independence across different $m$~\cite{DiaconisShahshahani1994,DiaconisEvans2001,johansson1997random}. Recent multivariate refinements give super-exponential rates in total variation and quantify regimes where the number of included harmonics $m$ may grow with $N$ while retaining Gaussianity~\cite{JohanssonLambert2021,DuitsJohansson2010}. This implies that if $m$ is not too large relative to $N$, the Gaussian approximation holds extremely well.

Two points are worth emphasizing. First, the \emph{normalization} here differs from the classical central limit theorem (CLT) for sums of weakly dependent random variables: there is no $1/\sqrt{N}$; instead the intrinsic variance scale of $\Tr U^m$ is $m$, and Gaussianity emerges in the large-$N$ limit with $m$ fixed (or $m$ slowly growing)~\cite{DiaconisShahshahani1994,JohanssonLambert2021}. Second, the physics of microscopic couplings appears only through the weights $\alpha_m$ multiplying the (asymptotically independent) Gaussian modes. As a consequence, to leading order the work distribution $P(w)$ is Gaussian with zero mean shift determined by $a_0$ and a variance 
\begin{equation}\label{Var}
\mathrm{Var}[w]\;=\;2\,\sum_{m\ge 1} m\,|\alpha_m|^2,
\end{equation}
while controlled, model-specific non-Gaussian corrections arise from (i) including many harmonics $m$ that grow with $N$ beyond the proven CLT regime, and/or (ii) singular features in the symbol (e.g., Fisher–Hartwig–type discontinuities \cite{hartwig1969asymptotic,forrester2010log}) that generate subleading cumulants. As a numerical check, the sample variance plateaus at the theoretical value in \eqref{Var}; see Fig.~\ref{fig:grid-3m-5m}(a). The overall picture aligns with the expectation that sudden-quench work distributions should be close to Gaussian \cite{zawadzki2023non} in generic, short-range settings, but makes that statement precise by pinning it to rigorous CLTs for traces of Haar unitary matrices, and by exhibiting how interaction range and anisotropies feed into the covariance through the coefficients $\{\alpha_m\}$~\cite{Silva2008,Heyl2018}.\\

%\noindent\textbf{Key message beyond finite range.—}
It is also worth stressing that, while we often illustrate with finite-range dispersions, none of our Gaussian conclusions hinge on that restriction. Two complementary routes guarantee a Gaussian core:
(i) the Szeg\H{o} regularity condition $\sum_{r\ge1} r(a_r^2+b_r^2)<\infty$, which covers analytic (and many mildly non-analytic) symbols; and
(ii) truncation to $m(N)$ harmonics with $m(N)\to\infty$ but $m(N)=o(N^{2/3})$, for which a quantitative multivariate CLT for $(\Re\Tr U^r,\Im\Tr U^r)_{r\le m(N)}$ yields the same limiting Gaussian with variance $\tfrac12\sum_{r\ge1} r(a_r^2+b_r^2)$ \cite{JohanssonLambert2021} (and a super-exponential error if $m\ll\sqrt{N}$). We return to this point in Sec.~IV.\\

% This paper is organized as follows: (i) we develop the mapping \eqref{eq:chi-linear-stat} for general finite-range quadratic chains, (ii) extract the Gaussian limit and its covariance for multi-neighbor couplings, and (iii) quantify and illustrate controlled deviations from Gaussianity via numerics that track the dependence on the interaction profile and the number of contributing harmonics.

\noindent\textbf{Organization.—}
In Section II, we present the matrix model and Toeplitz determinant description for finite-range quadratic chains and initial Gaussian and domain wall states. In Section III, we prove the generator identity and pushforward law. In Section IV, we derive the Gaussian core from the well-known central limit theorems in random matrix theory for traces of unitaries, giving the explicit covariance. We then specialise in quadratic fermionic models, including pairing models such as the XY model. We isolate the mechanisms responsible for non-Gaussian tails, such as the presence of many harmonics in the dispersion relation of the fermionic model, the occurrence of Fisher–Hartwig singularities (which are non-analytic jumps or zeros in the symbol of the Toeplitz matrix \cite{hartwig1969asymptotic,DIK2011}) and the existence of Fisher zeros (zeros of the Loschmidt amplitude in the complex time plane \cite{Heyl2013,Heyl2018})  near the real axis. This is illustrated using the XY chain (Sec. V). Numerical diagnostics (histograms, quantile-quantile (Q–Q) plots, and variance/kurtosis versus $N$) support the theory (Section VI). The outlook (section VII) suggests further work on open boundary conditions (OBC), Fisher–Hartwig singularities and across-critical quenches, as well as considering multicritical extensions by tuning the parameters $a_{r}$.

\section{Setup}\label{sec:setup}

\noindent\textbf{Model and conventions.—}
 We work with spinless fermions on a ring of length $L$ with periodic boundary
conditions (PBC). The lattice operators $c_j$ and $c_j^\dagger$ annihilate/create
a fermion at site $j$ and obey the canonical anti\-commutation relations (CAR)
\begin{equation}
\{c_i,c_j\}=0,\qquad \{c_i^\dagger,c_j^\dagger\}=0,\qquad \{c_i,c_j^\dagger\}=\delta_{ij}.
\end{equation}
With the discrete Brillouin set $\mathcal{B}=\{2\pi n/L:\,n=0,1,\dots,L\!-\!1\}$ we use
\begin{equation}
c_k:=\frac{1}{\sqrt{L}}\sum_{j=1}^{L} e^{-ik j}\,c_j,\qquad
c_j=\frac{1}{\sqrt{L}}\sum_{k\in\mathcal{B}} e^{ik j}\,c_k,
\end{equation}
so that $\{c_k,c_{k'}^\dagger\}=\delta_{k,k'}$ and other anti\-commutators vanish. We set $\hbar=1$ throughout. We consider finite–range, translation–invariant quadratic fermion chains on a ring of length $L$ with periodic boundary conditions (PBC). In real space,
\begin{align}
H &= \sum_{j=1}^{L}\sum_{r=1}^{m} \Big( J_r\, c^\dagger_{j+r} c_j + \overline{J_r}\, c^\dagger_j c_{j+r} \Big)
\nonumber\\
&\quad + \sum_{j=1}^{L}\sum_{r=1}^{m} \Big( \Delta_r\, c^\dagger_{j+r} c^\dagger_j + \overline{\Delta_r}\, c_j c_{j+r} \Big)
\;+\; \mu \sum_{j=1}^{L} c^\dagger_j c_j .
\label{eq:realH}
\end{align}
We call the number–conserving case $\Delta_r\equiv0$ \emph{XX‑type} (unpaired), and the case $\Delta_r\neq0$ \emph{paired/ Bogoliubov–de Gennes (BdG)}. For background on the free‑fermion/spin‑chain correspondence and diagonalization, see \cite{LiebSchultzMattis1961,Pfeuty1970,BarouchMcCoy1971}. For the BdG/Nambu formalism and Gaussian‑state technology, see \cite{Onishi1966,Robledo2009,Klich2003}.

\medskip
\noindent\textit{Momentum‑space form.} After a Fourier transform (to $k=\tfrac{2\pi}{L}\mathbb{Z}$) and, when pairing is present, writing operators in the Nambu basis $\Psi_k=(c_k, c^\dagger_{-k})^T$, the Hamiltonian decouples into $2\times2$ blocks with Pauli matrices $(\sigma_x,\sigma_y,\sigma_z)$
\begin{equation}
H=\sum_{k}\Psi_k^\dagger H(k)\Psi_k,\qquad
H(k)=\xi(k)\,\sigma_z+\Delta(k)\,\sigma_y,
\label{eq:BdG}
\end{equation}
with single–particle dispersion $\varepsilon(k)=2\sqrt{\xi(k)^2+\Delta(k)^2}$. For finite range $m$,
\begin{align}
\xi(k) &= 2\sum_{r=1}^{m} \Re(J_r)\cos(rk) - \mu, \label{eq:xi}\\
\Delta(k) &= 2\sum_{r=1}^{m} \Re(\Delta_r)\sin(rk). \label{eq:Delta}
\end{align}
The unpaired case corresponds to $\Delta\equiv0$; pairing simply activates the sine harmonics through $\Delta(k)$.

\medskip
\noindent\textit{Initial states.} We use two standard Gaussian preparations: (i) a translation‑invariant Slater/BCS eigenstate of $H_i$ (typically its ground state); and (ii) a \emph{domain‑wall} Slater state occupying a contiguous block $A\subset\{1,\dots,L\}$,
\begin{align}
|{\rm DW}_A\rangle &= \prod_{j\in A} c^\dagger_j\,|0\rangle, \\
C_{ij} &= \langle{\rm DW}_A|\,c_i^\dagger c_j\,|{\rm DW}_A\rangle \;=\; \mathbf{1}_{\{i=j\in A\}} .
\end{align}
Under PBC, case (i) yields for the Loschmidt amplitude a (block‑)Toeplitz determinant with symbol built from $e^{-it\varepsilon(k)}$ (and the Bogoliubov angles in the paired case); see the Heine–Szeg\H{o} framework and its block extensions in \cite{forrester2010log,BasorEhrhardt2003,JinKorepin2004}. For the domain‑wall state, translation symmetry is broken and one obtains Toeplitz \emph{minors} \cite{garcia2020toeplitz} formed from the restricted propagator; see \cite{perez2014mapping,santilli2020phase,perez2024dynamical}. These reductions are the starting point for our matrix‑model/Toeplitz analysis of $\chi_W(u)$ developed below.\\

For a sudden quench from $H_i$ to $H_f$ with $|\psi_0\rangle$ an $H_i$ eigenstate, the work characteristic function equals the Loschmidt amplitude, $\chi_W(u)=\langle e^{iuW}\rangle=G(-u)$ and $P(W)=(2\pi)^{-1}\!\int e^{-iuW}G(u)\,du$. For the number–conserving (XX–type) case under PBC and a translation–invariant Slater/BCS initial state, $G_N(t)$ is a Toeplitz determinant with symbol $f_t(e^{i\theta})=e^{-it\varepsilon(\theta)}$; for a domain–wall Slater state, $G_N(t)$ becomes a Toeplitz \emph{minor} built from the restricted propagator (see \cite{perez2014mapping,santilli2020phase,perez2024dynamical}).
For range-$m$ hoppings $J_r$ the one-body dispersion is $\varepsilon(\theta)=2\sum_{r=1}^m J_r\cos(r\theta)$ and the symbol is $f_t(\theta)=\exp(-it\,\varepsilon(\theta))=\exp(-i\sum_{r=1}^m a_r\cos(r\theta))$ with
\begin{equation}\label{eq:arPBC}
a_r=2J_r\qquad \text{(PBC, XX normalization)}.
\end{equation}
Consequently the work random variable can be written as the linear statistic (that is, a weighted sum of $\cos r\theta_j$ and $\sin r\theta_j$ modes)
\begin{equation}\label{eq:Wlin}
W=\sum_{r=1}^m a_r\,\Re\,\mathrm{Tr}\,U^r,\qquad U\sim\text{Haar on }U(N).
\end{equation}
With pairing terms, as in the XY chain, the symbol has both $\cos(r\theta)$ and $\sin(r\theta)$ contributions and \eqref{eq:Wlin} generalizes to $W=\sum_{r\le m}\big(a_r\,\Re\,\mathrm{Tr}\,U^r+b_r\,\Im\,\mathrm{Tr}\,U^r\big)$. \emph{Pairing} here means quadratic Bogoliubov–de Gennes terms that do not conserve particle number—i.e., anomalous terms
$c_k c_{-k}$ and $c_{-k}^\dagger c_k^\dagger$—equivalently, a nonzero off–diagonal gap $\Delta(k)$ in the $2\times2$ BdG block. This setting will be studied further in Section V.

\medskip
\noindent\textbf{Exact identification of the generator.}
Let $U\sim$ Haar on $U(N)$ with eigenangles $\{\theta_j\}_{j=1}^N$. For real vectors $(x_1,\dots,x_m)$ and $(y_1,\dots,y_m)$ define
\begin{equation}
V(\theta)=i\sum_{r=1}^m\big(x_r\cos r\theta+y_r\sin r\theta\big),\qquad f(\theta)=e^{V(\theta)}.
\end{equation}
\begin{lemma}[Exact generator identity (Heine--Szeg\H{o})]\label{lem:generator}
For any fixed $m$ and $N\in\mathbb N$,
\begin{equation}\label{eq:HSgenerator}
D_N(e^{V})=\mathbb{E}_{U(N)}\exp\!\Big(i\sum_{r=1}^m\big[x_r\,\Re\,\mathrm{Tr}\,U^r+y_r\,\Im\,\mathrm{Tr}\,U^r\big]\Big).
\end{equation}
\end{lemma}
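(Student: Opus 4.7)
The plan is to invoke the classical Heine--Szeg\H{o} representation of a Toeplitz determinant as a Haar-unitary average, and then simply rewrite the exponent using the elementary identity $\Tr U^r=\sum_j e^{ir\theta_j}$. Since $f=e^V$ is a finite trigonometric exponential, it is bounded, continuous, and integrable on the unit circle, so every step is unconditional and no regularity hypothesis is required.

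First, I would recall Weyl's integration formula on $U(N)$: the joint density of the eigenangles $\{\theta_j\}_{j=1}^N$ of a Haar-distributed unitary is
\begin{equation*}
\frac{1}{N!(2\pi)^N}\prod_{1\le j<k\le N}\bigl|e^{i\theta_j}-e^{i\theta_k}\bigr|^{2}.
\end{equation*}
Combining this with the Andreief (Heine) identity, which expresses $D_N(f)=\det[\hat f_{j-k}]_{j,k=0}^{N-1}$ as
\begin{equation*}
D_N(f)=\frac{1}{N!(2\pi)^N}\int_{[0,2\pi]^N}\prod_{j<k}\bigl|e^{i\theta_j}-e^{i\theta_k}\bigr|^{2}\prod_{j=1}^{N}f(\theta_j)\,d\theta_j,
\end{equation*}
immediately yields the compact form $D_N(f)=\mathbb{E}_{U(N)}\bigl[\prod_{j=1}^N f(\theta_j)\bigr]$.

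Second, I would specialise to $f=e^{V}$ and pull the product inside the exponential,
\begin{equation*}
D_N(e^{V})=\mathbb{E}_{U(N)}\exp\!\Bigl(\sum_{j=1}^{N} V(\theta_j)\Bigr),
\end{equation*}
and then substitute the explicit Fourier expansion of $V$. Because the sums are finite (over $r\le m$) and the inner sums over $j$ give power sums of $e^{i\theta_j}$, one has
\begin{equation*}
\sum_{j=1}^{N}V(\theta_j)=i\sum_{r=1}^{m}\Bigl(x_r\sum_{j=1}^{N}\cos r\theta_j+y_r\sum_{j=1}^{N}\sin r\theta_j\Bigr)=i\sum_{r=1}^{m}\bigl(x_r\,\Re\Tr U^r+y_r\,\Im\Tr U^r\bigr),
\end{equation*}
using $\Tr U^r=\sum_{j=1}^{N}e^{ir\theta_j}$ and hence $\Re\Tr U^r=\sum_j\cos r\theta_j$, $\Im\Tr U^r=\sum_j\sin r\theta_j$. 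Inserting this into the previous display gives exactly \eqref{eq:HSgenerator}.

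There is no genuine obstacle here: the lemma is, at bottom, a rewriting of Weyl's formula together with the algebraic identification of power sums with traces of matrix powers, and the finiteness of $m$ makes all manipulations term-by-term. The only thing worth flagging explicitly is that the identity is \emph{exact} at finite $N$ (no large-$N$ asymptotics, no smoothness assumption on the symbol beyond what $e^{V}$ trivially satisfies), which is precisely what allows Section~IV to treat Gaussianity as an asymptotic question about the right-hand side rather than about the Toeplitz determinant itself.
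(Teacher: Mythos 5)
Your proposal is correct and follows essentially the same route as the paper: invoke the Heine--Szeg\H{o} representation $D_N(f)=\mathbb{E}_{U(N)}\prod_j f(\theta_j)$ (which you justify via Weyl's integration formula and the Andreief identity, details the paper leaves to the citation), then identify $\sum_j\cos r\theta_j=\Re\Tr U^r$ and $\sum_j\sin r\theta_j=\Im\Tr U^r$ termwise in the finite sum. Your remark that the identity is exact at finite $N$ with no regularity hypotheses matches the paper's statement and use of the lemma.
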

\begin{proof}
Diagonalize $U=V\operatorname{diag}(e^{i\theta_1},\dots,e^{i\theta_N})V^\ast$. Heine--Szeg\H{o} gives $D_N(e^{V})=\mathbb{E}\prod_{j=1}^N e^{V(\theta_j)}=\mathbb{E}\exp\!\big(\sum_{j=1}^N V(\theta_j)\big)$. Since $\sum_j \cos r\theta_j=\Re\,\mathrm{Tr}\,U^r$ and $\sum_j \sin r\theta_j=\Im\,\mathrm{Tr}\,U^r$, we obtain \eqref{eq:HSgenerator}.
\end{proof}
\begin{corollary}[Work/Loschmidt generator]\label{cor:work}
With $\varepsilon(\theta)=2\sum_{r\le m}J_r\cos r\theta$ (and optional pairing $b_r\sin r\theta$), the Loschmidt amplitude is $G_N(t)=D_N(e^{-it\varepsilon})$. Up to the conventional sign in $\chi_W(u)=G(-u)$, the work characteristic function is the generator along the ray $(x_r,y_r)=(u\,a_r,\,u\,b_r)$ with $a_r=2J_r$:
\begin{equation}
\chi_W(u)=\mathbb{E}\exp\!\Big(iu\sum_{r\le m}\big[a_r\,\Re\,\mathrm{Tr}\,U^r+b_r\,\Im\,\mathrm{Tr}\,U^r\big]\Big).
\end{equation}
Thus the Fourier transform of the Toeplitz model for $G_N$ is exactly the joint moment  generator of $(\Re\,\mathrm{Tr}\,U^r,\Im\,\mathrm{Tr}\,U^r)_{r\le m}$ evaluated on a one-dimensional ray. In other words, $\chi_W(u)$ equals the moment-generating function of the multivariate random vector $(\Re\Tr U^r,\Im\Tr U^r)_{r\le m}$ evaluated along the specific line $(x_r,y_r) = (u a_r, u b_r)$ in its parameter space.”
\end{corollary}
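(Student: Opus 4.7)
The plan is to show that Corollary~\ref{cor:work} follows by combining three ingredients that are already in place: (a) the Toeplitz representation of the Loschmidt amplitude for PBC with a translation-invariant Slater/BCS initial state, (b) Lemma~\ref{lem:generator} (Heine--Szeg\H{o}), and (c) the sudden-quench identity $\chi_W(u)=G(-u)$ from the discussion around Eq.~\eqref{eq:chi-G}.

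First, I would invoke (a), namely $G_N(t)=D_N(f_t)$ with symbol $f_t(e^{i\theta})=e^{-it\varepsilon(\theta)}$ (extending to the paired case via the block-Toeplitz or appropriately rotated symbol so that $\varepsilon(\theta)$ picks up the optional $b_r\sin r\theta$ harmonics). Writing
\begin{equation}
-it\varepsilon(\theta)=i\sum_{r=1}^{m}\bigl[(-t a_r)\cos r\theta+(-t b_r)\sin r\theta\bigr],
\end{equation}
identifies the exponent with $V(\theta)$ from Lemma~\ref{lem:generator} under the parameter choice $x_r=-t a_r$, $y_r=-t b_r$ (where $a_r=2J_r$).

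Next, I would apply Lemma~\ref{lem:generator} verbatim: the Heine--Szeg\H{o} identity yields
\begin{equation}
G_N(t)=D_N(e^{-it\varepsilon})=\mathbb{E}_{U(N)}\exp\!\Big(-it\sum_{r\le m}\bigl[a_r\,\Re\,\mathrm{Tr}\,U^r+b_r\,\Im\,\mathrm{Tr}\,U^r\bigr]\Big).
\end{equation}
Finally, substituting $t\mapsto -u$ in accordance with (c) gives
\begin{equation}
\chi_W(u)=G_N(-u)=\mathbb{E}\exp\!\Big(iu\sum_{r\le m}\bigl[a_r\,\Re\,\mathrm{Tr}\,U^r+b_r\,\Im\,\mathrm{Tr}\,U^r\bigr]\Big),
\end{equation}
which is the claimed identity. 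The interpretation as the joint moment generating function of the random vector $(\Re\Tr U^r,\Im\Tr U^r)_{r\le m}$ evaluated on the one-dimensional ray $(x_r,y_r)=(u a_r, u b_r)$ is then a direct reading of the right-hand side.

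The proof is essentially bookkeeping, so there is no serious obstacle; the only care-points are (i) keeping the signs and the factor of $i$ consistent between the Heine--Szeg\H{o} exponent convention of Lemma~\ref{lem:generator} and the physical convention $\chi_W(u)=G(-u)$, and (ii) justifying, for the paired/BdG case, that the scalar Toeplitz identity still applies after the Bogoliubov rotation of the Nambu blocks so that the dispersion $\varepsilon(\theta)$ incorporates the sine harmonics $b_r\sin r\theta$. The latter is the one place where one must point back to the block-Toeplitz reduction of Sec.~\ref{sec:setup} (and references \cite{BasorEhrhardt2003,JinKorepin2004}) to ensure that the scalar symbol used in the Heine--Szeg\H{o} step is the correct effective one; beyond that, the corollary is an immediate specialization of Lemma~\ref{lem:generator}.
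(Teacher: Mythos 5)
Your proposal is correct and takes essentially the same route as the paper: the corollary is just Lemma~\ref{lem:generator} applied to the symbol $e^{-it\varepsilon}$ together with the sudden-quench identity $\chi_W(u)=G(-u)$, i.e.\ the generator evaluated on the ray $(x_r,y_r)=(u\,a_r,u\,b_r)$, with your sign bookkeeping through $t\mapsto -u$ matching the paper's convention. Your remark that the paired/BdG case needs the block-Toeplitz reduction to an effective scalar symbol is also consistent with how the paper handles it (Sec.~\ref{sec:setup} and the XY analysis).
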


The physical couplings $J_r$ map to mode coefficients via \eqref{eq:arPBC}. Therefore,
\begin{equation}
\mathrm{Var}(W)=\frac12\sum_{r=1}^{m} r\,a_r^2=2\sum_{r=1}^{m} r\,J_r^2,
\end{equation}
and, with pairing coefficients $b_r$, $\mathrm{Var}(W)=\frac12\sum_r r\,(a_r^2+b_r^2)$. Different neighbor ranges contribute additively.

\section{Fourier transform of Locshmidt amplitude: $P(W)$ as linear statistics of CUE eigenangles}\label{sec:fourier-inversion}
We make explicit why Toeplitz determinants with symbol $e^{iu\varepsilon(\theta)}$ are \emph{characteristic functions} of linear statistics of CUE eigenangles. This explains that the inverse Fourier transform produces the \emph{law} of such a linear statistic (shifted by the initial energy).

\paragraph*{Setup.}
Let $U\sim$ Haar on $U(N)$ with eigenangles $\{\theta_j\}_{j=1}^N$. Fix a real $2\pi$-periodic $\varepsilon$ with Fourier expansion
\begin{equation}\label{eq:eps-Fourier-ped}
\varepsilon(\theta)=\varepsilon_0+\sum_{r\ge 1}\big(\alpha_r\cos r\theta+\beta_r\sin r\theta\big),\qquad \alpha_r,\beta_r\in\mathbb R.
\end{equation}
Define the (real) linear statistic
\begin{equation}\label{eq:SN-def-ped}
S_N(U):=\sum_{j=1}^N \varepsilon(\theta_j)
= N\varepsilon_0+\sum_{r\ge1}\Big(\alpha_r\,\Re\,\mathrm{Tr}\,U^r+\beta_r\,\Im\,\mathrm{Tr}\,U^r\Big).
\end{equation}

Using Lemma~\ref{lem:generator}, we are now going to state and prove Theorem 1, which 
establishes that the work distribution $P(w)$ is the distribution of the random variable $S_N(U)-E_0$, i.e. a linear combination of random matrix traces offset by the initial energy. 

\begin{theorem}[Work characteristic function and law]\label{thm:mainCF}
%\begin{enumerate}
%\renewcommand{\labelenumi}{(\roman{enumi})} % roman labels without enumitem
\item\label{thm:CF} \emph{(i) Characteristic function.}
Let $|\psi_0\rangle$ be an eigenstate of $H_i$ with $H_i|\psi_0\rangle=E_0|\psi_0\rangle$. $|\psi_0\rangle$ is an eigenstate of $H_i$ with $H_i|\psi_0\rangle=E_0|\psi_0\rangle$ (so $E_0$ is the initial energy).
For a sudden quench to $H_f$ whose Loschmidt amplitude admits the Toeplitz representation
$G_N(t)=D_N(e^{-it\,\varepsilon})$. This Toeplitz determinant form for $G_N(t)$ holds for translation-invariant free fermion models with PBC, see \cite{perez2024dynamical,perez2024hawking} and Section V below. One has
%\begin{equation}
%\label{eq:chiW-Toeplitz-ped}
%\chi_W(u)=\langle\psi_0|e^{iu H_f}e^{-iu H_i}|\psi_0\rangle
%= e^{-iuE_0}\,D_N\!\big(e^{iu\varepsilon}\big)
%= e^{-iuE_0}\,\mathbb{E}\,e^{\,iu\,S_N(U)}\,.
%\end{equation}
\begin{equation}\label{eq:chiW-Toeplitz-ped}
\begin{split}
\chi_W(u)
&= \langle\psi_0|e^{iu H_f}e^{-iu H_i}|\psi_0\rangle \\
&= e^{-iuE_0}\,D_N\!\big(e^{iu\varepsilon}\big) \\
&= e^{-iuE_0}\,\mathbb{E}\,e^{\,iu\,S_N(U)}\,.
\end{split}
\end{equation}
So, using Lemma 1 (with $x_r = u a_r,; y_r = u b_r$ per Corollary 1), we identified $D_N(e^{iu\epsilon}) = E[\exp(iu S_N(U))]$. Then $\chi_W(u)=e^{-iuE_0}D_N(e^{iu\epsilon}) = e^{-iuE_0}E[e^{iu S_N(U)}]$, proving (i).

\noindent\emph{(ii) Fourier inversion and pushforward law.}
Let $P_N$ denote the work law. As a tempered distribution on $\mathbb{R}$,
\begin{equation}
P_N(w)
= \frac{1}{2\pi}\int_{-\infty}^{\infty} e^{-i u w}\,\chi_W(u)\,du .
\label{eq:pushforward}
\end{equation}
Equivalently, for every Schwartz test function $\varphi\in\mathcal{S}(\mathbb{R})$,
\begin{equation}
\int_{\mathbb{R}} \varphi(w)\,P_N(dw)
= \mathbb{E}\!\left[\varphi\!\big(S_N(U)-E_0\big)\right].
\label{eq:pushforward-test}
\end{equation}
In particular, in Dirac notation,
$P_N=\mathbb{E}\big[\delta(\,\cdot-(S_N(U)-E_0))\big]$ (distributionally). This means that $P_N(w)$ is the probability law of $S_N(U)-E_0$ when $U$ is Haar-random (this is the meaning of \eqref{eq:pushforward-test}, interpreting the result as a (tempered) distribution).

\noindent\textit{Sketch of proof.}
Let $\widehat{\varphi}(u)\coloneqq\int_{\mathbb{R}} e^{-i u w}\,\varphi(w)\,dw$.
By part~(i), $\chi_W(u)=e^{-i u E_0}\,\mathbb{E}\!\left[e^{i u S_N(U)}\right]$.
Since $\widehat{\varphi}\in\mathcal{S}(\mathbb{R})\subset L^1(\mathbb{R})$, Tonelli/Fubini applies and
\begin{align}
\int \varphi(w)\,P_N(dw)
&= \frac{1}{2\pi}\int \widehat{\varphi}(u)\,\chi_W(u)\,du \nonumber\\
&= \frac{1}{2\pi}\int \widehat{\varphi}(u)\,e^{-i u E_0}\,
   \mathbb{E}\!\left[e^{i u S_N(U)}\right] du \nonumber\\
&= \mathbb{E}\!\left[\frac{1}{2\pi}\int \widehat{\varphi}(u)\,
   e^{i u (S_N(U)-E_0)} du\right] \nonumber\\
&= \mathbb{E}\!\left[\varphi\!\big(S_N(U)-E_0\big)\right],
\end{align}
which proves \eqref{eq:pushforward-test}.

%\item\label{thm:FT} \textbf{Fourier inversion $\Rightarrow$ pushforward law.}
%If $P_N$ denotes the work distribution, then as a tempered distribution
%\begin{equation}\label{eq:FT-delta}
%P_N(w)=\frac{1}{2\pi}\!\int_{\mathbb R} e^{-iuw}\,\chi_W(u)\,du
%=\mathbb{E}\!\left[\delta\!\big(w+E_0-S_N(U)\big)\right].
%\end{equation}
%In particular, when a density exists, \eqref{eq:FT-delta} is its pointwise representation,
%showing that $P_N$ is the pushforward of Haar measure under $U\mapsto S_N(U)$.
%i/Fubini (the integrand has unit modulus). The inverse Fourier transform of $u\mapsto e^{iu S_N}$ is the Dirac delta at $S_N$, yielding \eqref{eq:FT-delta}.
%\qed
%\end{enumerate}
\end{theorem}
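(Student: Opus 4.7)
The plan is to treat parts (i) and (ii) in sequence, since (ii) requires the explicit form of $\chi_W$ proved in (i). For part (i), the argument is essentially algebraic: extract the initial-energy phase using the eigenstate property, then identify the resulting matrix element with a Toeplitz determinant, then convert that determinant to a Haar expectation via Lemma~\ref{lem:generator}. For part (ii), the argument is Fourier-analytic: test against a Schwartz function, use Plancherel-type duality, and swap expectation and integration by Fubini.

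For part (i), I would start from $\chi_W(u)=\langle\psi_0|e^{iuH_f}e^{-iuH_i}|\psi_0\rangle$ and apply $e^{-iuH_i}|\psi_0\rangle=e^{-iuE_0}|\psi_0\rangle$ to obtain $\chi_W(u)=e^{-iuE_0}\langle\psi_0|e^{iuH_f}|\psi_0\rangle=e^{-iuE_0}G_N(-u)$. By hypothesis $G_N(t)=D_N(e^{-it\varepsilon})$, so $\chi_W(u)=e^{-iuE_0}D_N(e^{iu\varepsilon})$. To identify the determinant with $\E[e^{iuS_N(U)}]$, I would split the symbol as $e^{iu\varepsilon(\theta)}=e^{iu\varepsilon_0}\cdot e^{iu\sum_{r\ge1}(\alpha_r\cos r\theta+\beta_r\sin r\theta)}$, use scalar homogeneity $D_N(c\phi)=c^N D_N(\phi)$ to pull out $e^{iuN\varepsilon_0}$, and then invoke Lemma~\ref{lem:generator} with $x_r=u\alpha_r$, $y_r=u\beta_r$. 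Comparing with the definition \eqref{eq:SN-def-ped} of $S_N(U)$, the factor $e^{iuN\varepsilon_0}$ combines with the remaining Haar expectation to yield precisely $\E[e^{iuS_N(U)}]$, completing \eqref{eq:chiW-Toeplitz-ped}.

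For part (ii), I would adopt the test-function formulation \eqref{eq:pushforward-test} as primary, since it sidesteps any interpretive issue about $P_N$ being a function versus a singular measure (for finite $L$, the spectrum of $H_f$ is discrete, so $P_N$ is a sum of Dirac masses on accessible energy differences). Given $\varphi\in\mathcal{S}(\mathbb R)$, I would expand $\varphi(w)=\tfrac{1}{2\pi}\int\widehat{\varphi}(u)e^{iuw}\,du$ by Fourier inversion, substitute this into $\int\varphi(w)\,P_N(dw)=\E_{P_N}[\varphi(W)]$, insert $\chi_W(u)$ as computed in (i), and swap the order of integration. The swap is justified because $|\chi_W(u)|\le 1$ (characteristic function) and $\widehat{\varphi}\in L^1(\mathbb R)$, so dominated convergence/Fubini applies. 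Running Fourier inversion inside the Haar expectation recovers $\varphi(S_N(U)-E_0)$, which is the claim. The distributional identity \eqref{eq:pushforward} then follows from density of Schwartz functions.

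The main obstacle is conceptual rather than computational: the pointwise formula $P_N(w)=\tfrac{1}{2\pi}\int e^{-iuw}\chi_W(u)\,du$ is not literally valid since $\chi_W$ need not be integrable (for finite $N$ it is quasiperiodic and $P_N$ is atomic), so the correct interpretation is as a tempered distribution, and the clean weak form \eqref{eq:pushforward-test} is what should be proved directly. The content worth emphasizing is that, once both (i) and (ii) are in hand, $P_N$ is literally the pushforward of the Haar measure on $U(N)$ under the affine linear statistic $U\mapsto S_N(U)-E_0$; every subsequent statistical statement about $P(W)$ — Gaussianity, variance, non-Gaussian tails — is then a statement about traces of Haar unitaries.
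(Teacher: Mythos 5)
Your proposal is correct and follows essentially the same route as the paper: part (i) via the eigenstate phase plus Lemma~\ref{lem:generator} to identify $D_N(e^{iu\varepsilon})$ with $\mathbb{E}[e^{iuS_N(U)}]$, and part (ii) via the weak (test-function) formulation with Fubini justified by $\widehat{\varphi}\in L^1$ and $|\chi_W|\le 1$. Your explicit handling of the constant mode $\varepsilon_0$ through $D_N(c\phi)=c^N D_N(\phi)$, and your remark that the pointwise inversion formula is only distributional because $P_N$ is atomic, are small but welcome refinements of details the paper leaves implicit.
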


\paragraph*{Remarks.}
\noindent\emph{(1) What is and is not Fourier–transformed.}
By Lemma~1 and Theorem~1(i), $D_N(e^{iu\varepsilon})$ is (up to the trivial phase $e^{-iuE_0}$) the \emph{characteristic function} of the real linear statistic $S_N(U)$, see Eq.~(20); hence it is positive–definite and satisfies $\chi_W(0)=1$. The inverse Fourier transform acts on this characteristic function in the $u$–variable and, in the tempered sense, yields the pushforward law in Eq.~\eqref{eq:pushforward} (equivalently Eq.~\eqref{eq:pushforward-test}). In particular,
\[
P_N \;=\; \mathbb{E}\!\big[\delta\big(\,\cdot\,-(S_N(U)-E_0)\big)\big]
\quad\text{(as distributions)}.
\]
Crucially, nothing is Fourier–transformed pointwise in the angle $\theta$: the symbol $e^{iu\varepsilon(\theta)}$ \emph{enters through} the Toeplitz/matrix–model identity that produces the characteristic function; the Fourier inversion is only in $u$.\\
%(1) \emph{What is and is not Fourier-transformed.} $D_N(e^{iu\varepsilon})$ is a \emph{characteristic function} of $S_N$, hence positive-definite and unity at $u=0$; its inverse Fourier transform is, by definition, the law in \eqref{eq:FT-delta}.\\
(2) Note that each trace $\Tr U^r$ is complex, so writing $S_N$ in terms of $\Re \Tr U^r$ and $\Im \Tr U^r$ emphasizes that each mode contributes two independent Gaussian degrees of freedom. In other words, each trace is like a 2D Gaussian (pair of quadratures), ensuring $S_N$ is a real linear combination of those 2D components\\
(3) \emph{Centering and variance.} Since $\mathbb{E}[\mathrm{Tr}\,U^r]=0$, we have $\mathbb{E}[S_N]=N\varepsilon_0$ and
\begin{equation}\label{eq:VarSN}
\mathrm{Var}(S_N-N\varepsilon_0)=\frac12\sum_{r\ge1} r\big(\alpha_r^2+\beta_r^2\big)
\quad\text{(exact for $N\ge \max r$)},
\end{equation}
using $\mathbb{E}|\mathrm{Tr}\,U^r|^2=\min\{r,N\}$ and orthogonality across different $r$ (see next Section for more on orthogonality). Notice that the real and imaginary parts of each $T_r$ are independent Gaussians with variance $r/2$ and this is why the variance \eqref{eq:VarSN} is a simple sum of contributions.

\begin{corollary}[Gaussian core with a growing number of modes]\label{cor:gaussian-growing-m}
Let $U\sim{\rm Haar}$ on $U(N)$ and set
\[
W_N=\sum_{r\ge 1}\big(a_r\,\Re\,\Tr U^r + b_r\,\Im\,\Tr U^r\big).
\]
Assume the Szeg\H{o} regularity $\sum_{r\ge1} r(a_r^2+b_r^2)<\infty$.
For any sequence $m(N)\to\infty$ with $m(N)=o(N^{2/3})$, write
\begin{equation}
W_N = W_N^{(\le m)} + R_N^{(>m)} ,\label{eq:WNsplit}
\end{equation}
with the truncation
\begin{equation}
\begin{aligned}
W_N^{(\le m)} &:= \sum_{r=1}^{m(N)} \big(a_r X_r + b_r Y_r\big),\\[-2pt]
&\text{with } X_r:=\Re\,\Tr U^r,\quad Y_r:=\Im\,\Tr U^r.
\end{aligned}
\label{eq:WNtrunc}
\end{equation}
Then, as $N\to\infty$,
\begin{equation}
W_N \ \Rightarrow\ \mathcal N\!\Big(0,\ \tfrac12\sum_{r\ge1} r(a_r^2+b_r^2)\Big).
\end{equation}
Moreover, for each compact $K\subset\mathbb{R}$,
\begin{equation}
\sup_{u\in K}\bigl|\chi_{W_N}(u)-e^{-u^2\sigma^2/2}\bigr|\ \longrightarrow\ 0.
\label{eq:char-unif}
\end{equation}
\[
\sigma^2=\tfrac12\sum_{r\ge1} r\,(a_r^2+b_r^2).
\]
In addition, if $m(N)\ll\sqrt{N}$ then the total-variation error for $W_N^{(\le m)}$
is super-exponentially small in $N/m(N)$ \cite{JohanssonLambert2021}.
\emph{Sketch.} By the multivariate CLT for traces \cite{JohanssonLambert2021}, the $2m(N)$–vector
$(X_1,Y_1,\dots,X_{m(N)},Y_{m(N)})$, scaled by $\sqrt{r/2}$ on each coordinate,
converges (in total variation) to a centered Gaussian in $\mathbb{R}^{2m(N)}$
whenever $m(N)=o(N^{2/3})$, with a super-exponential rate if $m\ll\sqrt{N}$.
Hence $W_N^{(\le m)}$ is asymptotically $\mathcal N(0,\tfrac12\sum_{r\le m(N)} r(a_r^2+b_r^2))$.
The tail has vanishing variance, $\Var R_N^{(>m)}=\tfrac12\sum_{r>m(N)} r(a_r^2+b_r^2)\to0$,
so \eqref{eq:char-unif} follows by Slutsky’s lemma.
\end{corollary}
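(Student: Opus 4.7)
\medskip
\noindent\textbf{Proof proposal.}
The plan is to split $W_N=W_N^{(\le m)}+R_N^{(>m)}$ as in \eqref{eq:WNsplit} and handle the two pieces with different tools: a quantitative multivariate CLT for the growing-dimension truncation, and an $L^2$/Chebyshev bound for the tail, glued by Slutsky's lemma. The uniform convergence of characteristic functions is then a standard consequence of weak convergence plus equicontinuity.

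First, for the truncation, I would invoke the multivariate trace CLT of \cite{JohanssonLambert2021}: under $m(N)=o(N^{2/3})$, the $2m(N)$–vector
\[
\bigl(\sqrt{2/r}\,X_r,\ \sqrt{2/r}\,Y_r\bigr)_{r\le m(N)}
\]
converges in total variation to the standard Gaussian on $\mathbb{R}^{2m(N)}$, with a super-exponential rate when $m(N)\ll\sqrt{N}$. The covariance structure needed to identify the limit is the Diaconis--Shahshahani exact identity $\E[\Tr U^r\overline{\Tr U^s}]=r\,\delta_{rs}$ and $\E[\Tr U^r\Tr U^s]=0$, valid as soon as $N\ge\max(r,s)$; splitting into real and imaginary parts yields $\E[X_rX_s]=\E[Y_rY_s]=\tfrac{r}{2}\delta_{rs}$ and $\E[X_rY_s]=0$. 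Since total-variation convergence is preserved under measurable mappings, the scalar $W_N^{(\le m)}=\sum_{r\le m(N)}(a_rX_r+b_rY_r)$ converges to $\mathcal{N}\!\bigl(0,\tfrac12\sum_{r\le m(N)}r(a_r^2+b_r^2)\bigr)$, whose variance tends to $\sigma^2=\tfrac12\sum_{r\ge1}r(a_r^2+b_r^2)$ by the Szegő summability hypothesis.

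Next, for the tail I would use the same orthogonality across distinct indices to get
\[
\Var R_N^{(>m)}\;=\;\tfrac12\sum_{r>m(N)}r(a_r^2+b_r^2)\;\longrightarrow\;0,
\]
which is just the vanishing of the tail of a convergent series; Chebyshev then gives $R_N^{(>m)}\to0$ in probability, and Slutsky yields $W_N\Rightarrow\mathcal{N}(0,\sigma^2)$. For the uniform statement \eqref{eq:char-unif}, pointwise convergence $\chi_{W_N}(u)\to e^{-u^2\sigma^2/2}$ is Lévy's continuity theorem, and uniform convergence on any compact $K\subset\mathbb{R}$ follows from the classical equicontinuity of characteristic functions of a tight family combined with the continuity of the Gaussian limit.

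The hard part is entirely packed into the appeal to \cite{JohanssonLambert2021}: the exponent $2/3$ is the sharpest currently known window for a joint total-variation Gaussian limit for a $2m(N)$-dimensional block of traces. Past this threshold the Diaconis--Shahshahani identity $\E|\Tr U^r|^2=\min\{r,N\}$ saturates as $r$ approaches $N$, higher trace cumulants cease to be subdominant, and genuine non-Gaussian corrections appear; pushing beyond $o(N^{2/3})$ would require finer Riemann--Hilbert or loop-equation asymptotics for the associated Toeplitz symbol, which I do not attempt here.
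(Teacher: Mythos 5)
Your proposal is correct and follows essentially the same route as the paper's own sketch: the same truncation $W_N=W_N^{(\le m)}+R_N^{(>m)}$, the same appeal to the Johansson--Lambert multivariate total-variation CLT for the $2m(N)$ scaled trace components under $m(N)=o(N^{2/3})$, the vanishing tail variance plus Slutsky, and the identification of the covariance via the exact Diaconis--Shahshahani moments. You merely make explicit some steps the paper leaves implicit (pushforward of total-variation bounds to the scalar statistic, Chebyshev for the tail, and L\'evy continuity with equicontinuity for the uniform convergence on compacts), which is a faithful filling-in rather than a different argument.
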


\section{Gaussian core: why traces of powers are asymptotically independent Gaussians}\label{sec:CLT}
Let $U\sim$ Haar on $U(N)$ and write $T_r:=\mathrm{Tr}\,U^r=\sum_{j=1}^N e^{ir\theta_j}$. Already at finite $N$ one has \cite{DiaconisShahshahani1994,DiaconisEvans2001}
\begin{equation}\label{eq:2ndmoments}
\mathbb{E}\,|T_r|^2=\min\{r,N\},\qquad \mathbb{E}\,T_r=0,\qquad \mathbb{E}\,T_rT_s=0\ (r\neq s),
\end{equation}
fixing the \emph{random-matrix normalization}: the natural scale is $\sqrt{r}$ (rather than $\sqrt{N}$). For any fixed $m$, “we have convergence in distribution to i.i.d. complex Gaussians, namely
\begin{equation}\label{eq:jointCLT-expanded}
\Big(\tfrac{T_1}{\sqrt{1}},\tfrac{T_2}{\sqrt{2}},\ldots,\tfrac{T_m}{\sqrt{m}}\Big)\;\xRightarrow[N\to\infty]{d}\;(Z_1,\ldots,Z_m),
\end{equation}
with $Z_r$ i.i.d.\ standard complex Gaussians. Equivalently,
\begin{equation}\label{eq:cov}
\mathbb{E}[T_r]=0,\quad \mathbb{E}[T_r\overline{T_s}]=\delta_{rs}\,r,\quad \mathbb{E}[T_r T_s]=0.
\end{equation}
So, different trace powers become independent Gaussians in the large-$N$ limit and it is the orthogonality $E[T_r T_s^] = 0$ for $r\neq s$ the reason the joint distribution factorizes. There are two complementary derivations which are physically illuminating.

\emph{(i) Toeplitz/cumulants (strong Szeg\H{o}).} Heine--Szeg\H{o} identifies $D_N(e^{iu\varepsilon})$ with the characteristic function of $S_N=\sum_j\varepsilon(\theta_j)=\sum_r(\alpha_r \Re T_r+\beta_r \Im T_r)+N\varepsilon_0$. The strong Szeg\H{o} limit theorem expands $\log D_N$ into cumulants and shows that as $N\to\infty$ \cite{johansson1997random}
\begin{equation}
\log D_N(e^{iu\varepsilon})= iu\,N\varepsilon_0-\frac{u^2}{4}\sum_{r\ge1}r(\alpha_r^2+\beta_r^2) + o(1),
\end{equation}
so $\kappa_1(S_N)=N\varepsilon_0$, $\kappa_2(S_N)=\tfrac12\sum_r r(\alpha_r^2+\beta_r^2)$, and $\kappa_p(S_N)\to0$ for $p\ge3$. Recall that $o(1)$ denotes a term that vanishes as $N\to\infty$.\\

%This CUE CLT for linear statistics—variance as above and $\kappa_{p\ge3}\to0$—was established in \cite{Johansson97Annals}.) More recently, Johansson–Lambert \cite{JohanssonLambert21} gave quantitative multivariate bounds (TV metric) and growth windows for $m(N)$.

\emph{(ii) Direct multivariate CLT for traces.} K. Johansson \cite{johansson1997random} established the CLT for linear statistics and
Johansson and Lambert \cite{JohanssonLambert2021} gave quantitative multivariate bounds (TV metric) and growth windows for $m(N)$. More precisely, 
they have proven a quantitative multivariate central limit theorem for $(T_1,\dots,T_m)$ (after $\sqrt{r}$ rescaling) including explicit error bounds, even allowing $m$ to grow moderately with $N$. More precisely, the multivariate CLT remains valid uniformly for $m=o(N^{2/3})$ and when $m(N)=o(\sqrt{N})$ the total‑variation error is super‑exponentially small in $N/m$ \cite{JohanssonLambert2021}. This is convenient for controlling $\chi_W(u)$ uniformly on compact $u$-intervals.

%Johansson and collaborators give quantitative multivariate normal approximation for the vector $(T_1,\ldots,T_m)$ after $\sqrt{r}$ rescaling, with explicit error bounds and $m$ allowed to grow slowly with $N$. 

Under any of these $m(N)$ growth values, for any fixed $m$ and coefficients $\{a_r,b_r\}_{r\le m}$,
\begin{equation}\label{eq:Wvar}
W=\sum_{r=1}^{m}\big(a_r\,\Re T_r+b_r\,\Im T_r\big)\ \Longrightarrow\ \mathcal{N}\!\Big(0,\; \tfrac12\sum_{r=1}^{m} r(a_r^2+b_r^2)\Big).
\end{equation}
This is our \emph{Gaussian core} for $P(W)$, verified below numerically. The Gaussian core is visible in the straight quantile–quantile plots, e.g. Fig.~\ref{fig:grid-3m-5m}(b,e) and Fig.~\ref{fig:grid-pl-exp}(b,e).

\paragraph*{Finite-$N$ corrections and growth of $m$.}
At fixed $m$, the leading non-Gaussian features are governed by the third and fourth cumulants.
%Soshnikov’s cumulant method for determinantal processes provides bounds on $\kappa_3,\kappa_4$ and enables Edgeworth corrections. 
When $m$ grows with $N$, the multivariate bounds of Johansson--Lambert quantify how fast $m$ may increase while $(T_1,\ldots,T_m)$ stays close (in total variation) to its Gaussian limit. Finite-$N$ deviations appear only in the extremes (detrended Q–Q): see Fig.~\ref{fig:grid-3m-5m}(c,f) and Fig.~\ref{fig:grid-pl-exp}(c,f). The surrogate and exact Haar data are indistinguishable in the bulk (compare Fig.~\ref{fig:grid-haar-surrogate}, rows 2–3).\\

Different powers $r\neq s$ are uncorrelated (indeed asymptotically independent), which justifies the variance additivity in \eqref{eq:VarSN} and later formulas.

\subsection{Products of traces: selection rules, Wick limits, and large-$N$ structure}\label{sec:Products}
Many observables with couplings to 1st, 2nd, \dots, $m$-th neighbors lead to correlators of products of traces such as $\mathbb{E}\big[\prod_{r\le m} T_r^{\alpha_r}\,\overline{T_r}^{\beta_r}\big]$. Two structural facts guide computations.

\emph{Selection rule (exact for all $N$).} Invariance under the global phase $U\mapsto e^{i\phi}U$ implies
\begin{equation}
\mathbb{E}\!\Big[\prod_{a=1}^{p} T_{k_a}\;\prod_{b=1}^{q}\overline{T_{\ell_b}}\Big]=0
\quad\text{unless}\quad \sum_{a=1}^p k_a=\sum_{b=1}^q \ell_b.
\end{equation}
So, this invariance is why any moment with an unmatched number of $T$ factors vanishes. In particular $\mathbb{E}[T_1T_2\cdots T_m]=0$ for all $m\ge1$.

\emph{Wick factorization (asymptotic).} Since \eqref{eq:jointCLT-expanded} holds with independent complex Gaussian limits,
\begin{equation}
\mathbb{E}\!\Big[\prod_{r=1}^m T_r^{\alpha_r}\,\overline{T_r}^{\beta_r}\Big]
=\prod_{r=1}^m\big(\delta_{\alpha_r,\beta_r}\,\alpha_r!\,r^{\alpha_r}\big)+o(1).
\end{equation}
This means that mixed moment factorizes into second moments (pairings) plus $o(1)$. Thus, at leading order, only pairings $T_r$ with $\overline{T_r}$ survive; cross-pairings between different powers are subleading. Two basic consequences are, with $r,s\ \text{fixed},\ N\to\infty$
\begin{equation}
\mathbb{E}\,|T_r|^{2q}= q!\,r^{\,q}+o(1),\qquad
\mathbb{E}\big[T_r\,\overline{T_s}\big]=\delta_{rs}\,r.
\end{equation}

The lack of cross-covariance and the $\sqrt{r/2}$ scaling are borne out by the scatter panels with 95\% ellipses; see Fig.~\ref{fig:scatter-grid}(a,b) and the standardized versions in Fig.~\ref{fig:scatter-grid}(c,d).

\section{Quadratic fermionic chains, Gaussian core, and non-Gaussian tails}\label{sec:fermions}

We henceforth absorb factors of 2 into the coefficients and refer to the Fourier components as $a_r$ (from $\cos r\theta$ terms of $\epsilon$) and $b_r$ (from $\sin r\theta$ terms), for consistency with earlier sections. In particular, the examples below use $a_r$ and $b_r$, as given below in \eqref{WLS}.

\subsection{Starting point: a general quadratic chain (with pairing)}
Consider a translation-invariant quadratic fermionic Hamiltonian (with possible pairing) that, after Fourier and Nambu (BdG) lifting, decouples as
\begin{equation}
H=\sum_{k\in\mathcal{B}} \Psi_k^\dagger\,\mathcal{H}(k)\,\Psi_k,\qquad
\mathcal{H}(k)=\xi(k)\,\sigma_z+\Delta(k)\,\sigma_y,
\end{equation}
with dispersion $\varepsilon(k)=2\sqrt{\xi(k)^2+\Delta(k)^2}$ and a Brillouin set $\mathcal{B}$. For a sudden quench from $H_i$ to $H_f$, the work characteristic function
\begin{equation}
\chi_W(u)=\langle e^{iu H_f}e^{-iu H_i}\rangle_{\psi_0}=e^{-iuE_0}\,\langle \psi_0|e^{iu H_f}|\psi_0\rangle
\end{equation}
is representation-invariant; it factorizes over $k$ and admits a compact determinant/Pfaffian form. Under PBC and translation invariance, $\log\chi_W(u)$ is a Riemann sum over $k$ that, upon Fourier expansion in $\theta$, reduces to the Toeplitz model used above:
\begin{equation}\label{char}
\chi_W(u)=D_N\!\left(e^{i\sum_{r\ge 1}\,[x_r\cos(r\theta)+y_r\sin(r\theta)]}\right),
\end{equation}
with coefficients $(x_r,y_r)$ determined by the Fourier content of $\varepsilon$ together with the Bogoliubov angles that encode the initial BCS vacuum. Consequently the work variable is the linear statistic
\begin{equation}\label{WLS}
W=\sum_{r\ge1}\big(a_r\,\Re\,\mathrm{Tr}\,U^r+b_r\,\Im\,\mathrm{Tr}\,U^r\big),
\end{equation}
where $a_r\propto \widehat{\varepsilon}_r^{(\cos)}$ and $b_r\propto \widehat{\varepsilon}_r^{(\sin)}$ (pairing produces the sine-harmonics). We stress that in \eqref{char}, the coefficients $x_r$ and $y_r$ are directly related to the Fourier expansion of the single-particle energy $\epsilon(k)$. Writing $\epsilon(\theta) = \epsilon_0 + \sum{r\ge1} [\epsilon^{(c)}_r \cos(r\theta) + \epsilon^{(s)}_r \sin(r\theta)]$, and noting that pairing (nonzero $\Delta(k)$) contributes sine terms via the Bogoliubov rotation, one finds $x_r = u,\epsilon^{(c)}_r$ and $y_r = u,\epsilon^{(s)}_r$. Equivalently, $a_r = \epsilon^{(c)}_r$ and $b_r = \epsilon^{(s)}r$ in the notation of \eqref{WLS} (up to the factor of $2$ coming from the Jordan-Wigner convention, cf. Corollary 1).\\

The fact that the initial state is a \emph{Bogoliubov} (BCS) vacuum rather than the Jordan--Wigner vacuum merely changes the values of $\{a_r,b_r\}$; the identities $\chi_W(u)=G(-u)$ and the Toeplitz representation are unaffected.

\subsection{When the Gaussian core is guaranteed}
Let $V(\theta)=iu\,\varepsilon(\theta)=iu\big(\varepsilon_0+\sum_{r\ge1}(\alpha_r\cos r\theta+\beta_r\sin r\theta)\big)$. The strong Szeg\H{o} condition
\begin{equation}\label{eq:HS12}
\sum_{r\ge1} r\,|\widehat{V}_r|^2=u^2\sum_{r\ge1} r(\alpha_r^2+\beta_r^2)<\infty \quad\;(V\in H^{1/2})
\end{equation}
is satisfied whenever $\varepsilon$ has Fourier modes decaying a bit faster than $1/r$. Under \eqref{eq:HS12}:
(i) the cumulant expansion gives $\kappa_1(S_N)=N\varepsilon_0$, $\kappa_2(S_N)=\tfrac12\sum_{r\ge1} r(\alpha_r^2+\beta_r^2)$ and $\kappa_p(S_N)\to 0$ for $p\ge 3$;
(ii) the multivariate CLT for $(T_1,\ldots,T_m)$ after $\sqrt{r}$ rescaling holds for fixed $m$, with quantitative extensions when $m=m(N)$ grows slowly.
Therefore, for $W=\sum_{r\le m}(a_r\,\Re T_r+b_r\,\Im T_r)$ one has
\begin{equation}
W\ \Rightarrow\ \mathcal{N}\!\Big(0,\;\tfrac12\sum_{r\le m} r(a_r^2+b_r^2)\Big)\qquad (N\to\infty).
\end{equation}
It is worth stressing that the strong Szeg\H{o} condition \eqref{eq:HS12} is what guarantees the Gaussian core rigorously: $W$ converges in distribution to $\mathcal{N}(0,\frac{1}{2}\sum r(a_r^2+b_r^2))$. 
Two useful corollaries: (1) if $\varepsilon$ is real-analytic (gapped models, generic BdG), then $(a_r,b_r)$ decay exponentially and the Gaussian core is extremely sharp; (2) even with mild cusps (e.g., linear dispersion near a gap closing), the sum in \eqref{eq:HS12} typically converges, so the core remains Gaussian while finite-$N$ corrections increase. The exponential vs.\ power-law Fourier decay manifests exactly as expected: the exponential case shows a virtually perfect Q–Q line, while the power-law case exhibits mild tail curvature (Fig.~\ref{fig:grid-pl-exp}(e) vs.\ Fig.~\ref{fig:grid-pl-exp}(b,c)).

\subsection{Where non-Gaussian tails come from}
%\subsection*{Fisher--Hartwig symbols, Toeplitz matrix models, and Painlev\'e~V (pointer)}
In several physically relevant mappings, such as for the case of open boundary conditions presented in \cite{perez2024dynamical} the symbol picks up genuine Fisher--Hartwig (FH) singularities such as algebraic endpoints and/or jump phases, unlike the PBC case studied here, where $\epsilon(\theta)$ is smooth unless crossing a critical point. A canonical FH symbol has the form, with $\alpha_j>-1/2$ \cite{forrester2010log}
\begin{equation}
f_{\mathrm{FH}}(\theta)=e^{V(\theta)}\prod_{j=1}^{J}\big|2-2\cos(\theta-\theta_j)\big|^{\alpha_j}\,
e^{i\beta_j(\theta-\theta_j-\pi)}.\  
\end{equation}
For such symbols, the Toeplitz asymptotics depart from the pure strong Szeg\H{o} quadratic law: $D_N(f_{\mathrm{FH}})$ exhibits explicit $N$-dependent power laws, oscillatory factors, and constant terms determined by the FH data $\{\alpha_j,\beta_j,\theta_j\}$.\\ %In matrix-model language, this is precisely the origin of the robust non-Gaussian features (edges/oscillations) we flagged above in item (B).
%\paragraph{PBC/FH (periodic boundary conditions).}

Under PBC, Loschmidt symbols naturally enter the Fisher--Hartwig (FH) class whenever the quench creates discontinuities at Fermi points or algebraic endpoints. For a \emph{flux (twist) quench} without pairing, one has a scalar Toeplitz symbol
$f_t(e^{i\theta})=\exp(i t\,\varepsilon(\theta))\,g(\theta)$ in which $g$ carries \emph{jump} FH singularities at the (pre/post) Fermi points $\{\theta_j\}$ with exponents $\beta_j\neq0$; expanding $\log f_t(\theta)=\sum_{m\neq0} c_m e^{im\theta}$ gives
$c_m^{(j)}=\tfrac{i\beta_j}{m} e^{-im\theta_j}$ and, by Heine--Szeg\H{o} plus the covariance $\mathbb{E}|\mathrm{Tr}\,U^m|^2=m$ \cite{DiaconisShahshahani1994,johansson1997random}, the linear statistic variance scales as
\[
\mathrm{Var}\big(\log G_N\big)=\sum_{m\le N} m |c_m|^2
=\Big(\sum_j \beta_j^2\Big)\log N + O(1),
\]
i.e.\ FH enhances fluctuations to a $\sqrt{\log N}$ scale while the \emph{Gaussian core} from the smooth part (Szeg\H{o}) remains.\\ 

For quenches with pairing (XY or Ising case, to be further discussed below), the object is a \emph{block Toeplitz} determinant with $2\times2$ symbol $\Phi(k,t)=(\mathbb{I}-\mathcal C_0(k))+\mathcal C_0(k)e^{-i\mathcal H_1(k)t}$; quenches across criticality generate \emph{root} FH exponents at $k=0,\pi$ ($\alpha=\pm\tfrac12$) and possibly jumps ($\beta\neq0$), producing deterministic power-law/oscillatory factors on top of the same CLT-driven Gaussian core for the finite Fourier harmonics of $\log\det\Phi$. Thus, in PBC as in the OBC case that will be discussed elsewhere, the trace-statistics mechanism gives the Gaussian center, while the FH data dictate the normalization and non-Gaussian tails.\\

Non-Gaussian features arise when at least one hypothesis behind Szeg\H{o}/CLT degrades. In particular: \\
\paragraph*{(A) Too many modes / slow decay.} If the effective number of harmonics $m$ grows with $N$ and $(a_r,b_r)$ decay slowly, higher cumulants need not vanish. A sufficient condition for a stable Gaussian core is $\sum_{r\ge1} r(a_r^2+b_r^2)<\infty$ together with admissible growth $m(N)$ (cf.~Johansson--Lambert). At the borderline $a_r,b_r\sim r^{-1}$ the variance diverges logarithmically and the correct scaling changes (non-Gaussian-looking histograms at fixed normalization). That is, a very large number of Fourier modes or a dispersion whose Fourier coefficients $a_r,b_r$ decay too slowly (borderline $1/r$ behavior or weaker) can invalidate the central limit approximation. See the tail curvature in detrended Q–Q, Fig.~\ref{fig:grid-pl-exp}(c).\\
\paragraph*{(B) Fisher--Hartwig singularities.} In variants where the symbol develops genuine FH factors, which are nonanalytic features in the symbol such as jump discontinuities or zeros \cite{hartwig1969asymptotic}, $\log D_N$ contains non-quadratic terms (power-laws/oscillations), and persistent non-Gaussian edges appear. This includes certain OBC reductions and matrix-model weights related to Painlev\'e~V \cite{perez2024dynamical} (to be discussed elsewhere).\\
\paragraph*{(C) Critical quenches / DQPTs.} Crossing a quantum critical manifold brings Fisher zeros of $G(t)$ close to the real axis. Indeed, if the quench crosses a phase transition, the Loschmidt amplitude develops Fisher zeros arbitrarily close to the real axis, which translates into $\chi_W(u)$ acquiring singularities for real $u$ (non-analytic behavior in time). This will thicken the tails of $P(W)$. Persistent edge features (cf. discussion here; qualitative signature matches the residuals in Fig.~\ref{fig:grid-3m-5m}(c,f)).\\
\paragraph*{(D) Finite-$N$ mesoscopic windows.} For moderate $N$ or weight on medium/large $r$, the third/fourth cumulants---which are $o(1)$ asymptotically---are visible. Edgeworth corrections from the cumulant method would capture these tails. Notice how small residual deviations are confined to extremes, Fig.~\ref{fig:grid-3m-5m}(c,f).
\\

%For diagnostics, define the variance budget and a skewness proxy
%\begin{equation}
%\mathcal{S}_m=\sum_{r\le m} r\,(a_r^2+b_r^2),\qquad
%\mathcal{T}_m=\sum_{r\le m} r^{3/2}\,(|a_r|^3+|b_r|^3).
%\end{equation}
%The ratio $\mathcal{T}_m/\mathcal{S}_m^{3/2}$ tracks the relative size of cubic cumulants: it $\to 0$ under fast decay (Gaussian-looking tails) and grows in borderline regimes. We define $T_m$ as the weighted $L^3$-norm of the mode coefficients (with a factor $r^{3/2}$ for convenience), and $S_m$ as the weighted $L^2$-norm. The ratio $T_m/S_m^{3/2}$ is dimensionless and serves as a proxy for the normalized third cumulant (skewness). If all weight is on a single mode, this ratio would correspond exactly to the skewness of that mode’s distribution. When the ratio is small, it indicates the distribution is close to Gaussian (since higher cumulants are negligible)

\subsubsection{The XY chain: square-root dispersion and tails}
\label{subsec:XY-precise}

\paragraph{Model and diagonalization.}
After the Jordan--Wigner map, the translation-invariant XY chain in a transverse field is a quadratic (BdG) fermion system~\cite{LiebSchultzMattis1961,Pfeuty1970,BarouchMcCoy1971} with
\begin{align}
\mathcal{H}(k) &= \mathbf{d}(k)\!\cdot\!\boldsymbol{\sigma},\\
\mathbf{d}(k) &= \big(\gamma\sin k,\,0,\,h-\cos k\big),\\
\varepsilon(k) &= 2\lvert\mathbf{d}(k)\rvert
= 2\sqrt{(h-\cos k)^2+(\gamma\sin k)^2}.
\end{align}
The \emph{Bogoliubov angle} is
\begin{equation}
\tan\!\big(2\theta_k\big)=\frac{\gamma\sin k}{\,h-\cos k\,},
\end{equation}
so that a rotation by $2\theta_k$ in Nambu space diagonalizes $\mathcal{H}(k)$.

\paragraph{Loschmidt amplitude and work characteristic function.}
For a sudden quench $(\gamma_i,h_i)\!\to\!(\gamma_f,h_f)$ from the ground state of $H_i$ (a BCS vacuum), the Loschmidt amplitude and TPM work characteristic function are
\begin{equation}
G(t):=\langle\psi_0|e^{-i H_f t}|\psi_0\rangle,
\qquad
\chi_W(u)=e^{-iuE_0}\,G(-u),
\end{equation}
with $H_i|\psi_0\rangle=E_0|\psi_0\rangle$~\cite{Silva2008}. The overlap factorizes over $(k,-k)$ and admits the standard product form~\cite{Heyl2013}
\begin{equation}
\label{eq:G-XY-product}
G(t)=\prod_{k>0}\Big[\,
\cos\!\big(\varepsilon_f(k)\,t\big)
-\, i\,\sin\!\big(\varepsilon_f(k)\,t\big)\,
\cos\!\big(2\Delta\theta_k\big)\,\Big],
\qquad
%\Delta\theta_k:=\theta_k(\gamma_f,h_f)-\theta_k(\gamma_i,h_i).
\end{equation} 
where $\Delta\theta_k:=\theta_k(\gamma_f,h_f)-\theta_k(\gamma_i,h_i)$.
\paragraph{From block Toeplitz to a scalar symbol.}
Equivalently, using the Klich reduction for Gaussian states~\cite{Klich2003} (and the HFB/Pfaffian viewpoint~\cite{Onishi1966,Robledo2009}),
\begin{equation}
\Phi(k,t)=(\mathbb{I}-\mathcal C_0(k))+\mathcal C_0(k)\,e^{-i\mathcal H_f(k)t},
\end{equation}
with $\mathcal C_0(k)=\tfrac12(\mathbb{I}-\hat{\mathbf d}_i(k)\!\cdot\!\boldsymbol{\sigma})$.\\
For translation-invariant chains this is a \emph{block Toeplitz} operator in real space; see e.g.\ the block-Toeplitz structure for XY correlation-matrix functionals in~\cite{JinKorepin2004} and strong Szeg\H{o}--type expansions for matrix symbols in~\cite{BasorEhrhardt2003}. A short $2\times2$ calculus gives
\begin{multline}
\det \Phi(k,t)=\Big[\,
\cos\!\big(\varepsilon_f(k)t\big) \\
-\, i\,\sin\!\big(\varepsilon_f(k)t\big)\,\cos\!\big(2\Delta\theta_k\big)\,\Big]^2.
\end{multline}
Thus the \emph{scalar} symbol
\begin{equation}
f_t(e^{ik})=\cos\!\big(\varepsilon_f(k)t\big)
- i\,\sin\!\big(\varepsilon_f(k)t\big)\,\cos\!\big(2\Delta\theta_k\big)
\end{equation}
controls $G(t)$ via $G(t)=\sqrt{\det T_N(\Phi)}=D_N(f_t)^{1/2}$ and \eqref{eq:G-XY-product}.

\paragraph{Small-$u$ reduction to traces.}
Setting $u=t$ in $\chi_W(u)=e^{-iuE_0}G(-u)$ and expanding for small $u$,
\begin{align}
f_u(e^{ik})
&= 1 - i\,u\,\tilde{\varepsilon}(k)
- \tfrac12 u^2\,\varepsilon_f(k)^2 + O(u^3),\\
\tilde{\varepsilon}(k)
&:= \varepsilon_f(k)\,\cos\!\big(2\Delta\theta_k\big),\\
\log f_u(e^{ik})
&= - i\,u\,\tilde{\varepsilon}(k) + O(u^2).
\end{align}
With the Fourier series
\begin{equation}
\tilde{\varepsilon}(k)=\sum_{m\in\mathbb Z}\tilde{\alpha}_m\,e^{i m k},
\qquad
\tilde{\alpha}_m=\frac{1}{2\pi}\!\int_{-\pi}^{\pi}\!
\tilde{\varepsilon}(k)\,e^{-i m k}\,dk,
\end{equation}
Heine--Szeg\H{o} gives the linear-statistics form
\begin{multline}
\log \chi_W(u)=
iu\sum_{m\ge 1}\Big(\tilde{\alpha}_m\,\Tr U^m
+\overline{\tilde{\alpha}_m}\,\Tr U^{-m}\Big) \\
+\, O(u^2),
\end{multline}
so the selection rule and Wick factorization of Sec.~V apply verbatim with
$\varepsilon(k)\mapsto\tilde{\varepsilon}(k)$.
In particular,
\begin{equation}
\sigma^2 \;=\; 2\sum_{m\ge 1} m\,|\tilde{\alpha}_m|^2.
\end{equation}

\paragraph{When non-Gaussian tails appear.}
If the quench does \emph{not} cross a critical line, then $k\mapsto f_t(e^{ik})$ is smooth for fixed $t$; $\tilde{\varepsilon}(k)$ is analytic and the Fourier coefficients $\tilde{\alpha}_m$ decay exponentially, so the central $P(w)$ is Gaussian with $O(1)$ width. If the quench \emph{does} cross a critical line, $\Delta\theta_k$ acquires a endpoint jump at $k=0$ or $\pi$, $f_t$ becomes piecewise smooth (FH type), and Toeplitz FH asymptotics yield deterministic power-law/oscillatory corrections~\cite{DIK2011} (visible non-Gaussian tails).

\paragraph{XX limit}
For $\gamma_i=\gamma_f=0$ one has $\theta_k\equiv 0$, hence $\Delta\theta_k=0$ and \eqref{eq:G-XY-product} collapses to
\begin{equation}
    G(t)=\prod_{k>0} e^{-i\,\varepsilon_f(k)\,t},
\end{equation}
the translation-invariant, number-conserving XX limit (cf.\ \cite{LiebSchultzMattis1961,Pfeuty1970}). This should not be confused with the \emph{domain-wall XX} protocol analyzed earlier, where $G_N(t)=D_N(e^{-it\,\varepsilon})$ is a Toeplitz determinant built from the restricted propagator on a spatial block. Here $\varepsilon_f(k)=2\,|h_f-\cos k|$ is the BdG (positive) branch. 
If one prefers the Jordan–Wigner convention $\xi_f(k)=2(\cos k-h_f)$, the same reduction is obtained by keeping 
the piecewise sign in $\cos(2\Delta\theta_k)$, which amounts to writing $e^{-i\,|\xi_f|t}$ together with jump phases at the Fermi points (FH language). 
This change of convention does not affect the Gaussian-core variance nor our diagnostics.
%see Sec.~\ref{sec:DW-XX} and Eq.~\eqref{eq:G-XX-DW}.

%\paragraph{When non-Gaussian tails appear.}
%If the quench \emph{does not cross} the Ising/anisotropy critical lines (no singular behavior of $\Delta\theta_k$ at $k=0,\pi$), then $k\mapsto f_t(e^{ik})$ is smooth for each fixed $t$; $\tilde{\varepsilon}(k)$ is analytic and the Fourier coefficients $\tilde{\alpha}_m$ decay exponentially in $m$. The central part of $P(w)$ is Gaussian with $O(1)$ width, as in our XX/extended-XX discussion. If the quench \emph{crosses} a critical line, $\Delta\theta_k$ develops a jump at $k=0$ or $\pi$, which makes $f_t$ piecewise smooth with a jump (FH type). Then Toeplitz FH asymptotics produce deterministic power-law/oscillatory corrections atop the Gaussian core, i.e., visible non-Gaussian tails, consistent with the general FH discussion in Sec.~\ref{sec:FH}.

\subsubsection{Quick summary of Gaussianity and tail behavior}
\begin{itemize}
\item \emph{Gapped BdG (analytic $\varepsilon$):} sharp Gaussian core; pairing just introduces $b_r$ in the variance $\tfrac12\sum r(a_r^2+b_r^2)$.
\item \emph{Critical (no FH):} core Gaussian; tails show visible but shrinking non-Gaussian corrections.
\item \emph{FH/jumps/zeros or borderline decay:} persistent non-Gaussian structures (edges/oscillations); the correct asymptotics are beyond strong Szeg\H{o}.
\item \emph{Large $m$ or slow decay of $(a_r,b_r)$:} the fixed-normalization histogram looks non-Gaussian; after the right rescaling a Gaussian limit often returns.
\end{itemize}

%\section{Coupling-to-mode map and explicit variance}\label{sec:couplings}

\section{Numerical diagnostics}
We assess Gaussianity and the CLT predictions with four complementary diagnostics: (i) histogram with Gaussian overlay; (ii) normal quantile-quantile plots (Q--Q plots; defined below) (and, when informative, detrended Q--Q); (iii) scatter plots of trace components with theory ellipses, including standardized versions; and (iv) variance and excess kurtosis versus $N$ with error bars. The full set of panels is shown after the references.\\

\paragraph*{Notation and basic definitions.}
Throughout this section, $N$ denotes the matrix dimension of $U(N)$ in the unitary matrix model; $m$ is the number of active Fourier harmonics in the dispersion (equivalently, the neighbor range retained in the chain); and $n$ is the number of Monte--Carlo realizations used per panel (sample size behind each histogram or Q--Q plot).
A \emph{Q--Q (quantile--quantile) plot} compares the empirical quantiles of the (centered) work variable on the vertical axis to the corresponding theoretical normal quantiles on the horizontal axis; alignment with the line $y=x$ indicates a Gaussian core, while systematic curvature near the extremes reveals tail deviations.
Histograms use the Freedman--Diaconis bin width \cite{FreedmanDiaconis1981}
\[
h \;=\; \frac{2\,\mathrm{IQR}}{n^{1/3}},
\quad
\mathrm{IQR} := Q_{0.75}-Q_{0.25},
\]
where $\mathrm{IQR}$ is the \emph{interquartile range} (the difference between the 75th and 25th percentiles).\\%
%\footnote{See D.~Freedman and P.~Diaconis, ``On the histogram as a density estimator: $L_2$ theory,'' \emph{Z. Wahrscheinlichkeitstheorie verw.\ Geb.} \textbf{57} (1981), 453--476.}

%To generate surrogate data, we replace the Haar matrix $U$ by independent complex Gaussian variables: $\Tr U^r$ is sampled from $\mathcal{CN}(0,r)$ (a complex normal with mean 0 and variance $r$) for each $r$, with different $r$ samples taken independent. This preserves the exact first- and second-moment structure \eqref{eq:2ndmoments} but assumes asymptotic independence outright. 

%The variance budget and skewness proxy were defined and discussed at the end of Section V.C, after mentioning finite $N$ effects there. 

\paragraph*{Sampling and rendering choices.}
To display the Gaussian core with low sampling noise, most panels are generated with a surrogate. 

We use a Gaussian surrogate that preserves the exact first-/second-moment structure (Sec.~IV). To generate surrogate data, we replace the Haar matrix $U$ by independent complex Gaussian variables: $\Tr U^r$ is sampled from $\mathcal{CN}(0,r)$ (a complex normal with mean 0 and variance $r$) for each $r$, with different $r$ samples taken independent. This preserves the exact first- and second-moment structure but assumes asymptotic independence outright. This means that we draw:
\begin{equation}
\Tr U^r \stackrel{d}{\approx}\mathcal{CN}(0,r),
\end{equation}
independently over the finite set of harmonics and form $W=2\,\Re\sum_r \alpha_r \Tr U^r$
(Section~IV). This reproduces the central limit covariance exactly for fixed $r$ and large $N$, at a fraction of the cost of explicit Haar draws. As a cross-check, we include ``Haar vs surrogate'' comparisons at $N=80$, $n=300$, showing near-identical bulk behavior. See Fig.~\ref{fig:grid-haar-surrogate} for a Haar vs surrogate comparison.
Histograms use the Freedman--Diaconis bin width $h=2\,\mathrm{IQR}/n^{1/3}$ \cite{FreedmanDiaconis1981} and overlay the fitted Gaussian $\mathcal{N}(0,\hat\sigma^2)$.
Q--Q plots use the classic ``points + $y=x$'' style; we also provide \emph{detrended} Q--Q (residuals vs theoretical quantiles), which makes small tail deviations easier to see.\\

\paragraph*{What the panels show and how to read them.}
\begin{enumerate}
\item \textbf{Histogram/Q--Q pairs.} High-$n$ panels (histogram with Gaussian overlay; Q--Q with $y=x$ line) display the Gaussian core and isolate any tail curvature.
\item \textbf{Scatter plots.} Clouds of $(\Re\,\Tr U^r,\Re\,\Tr U^s)$ include the \emph{theory 95\% ellipse}; standardized versions divide each axis by $\sqrt{m/2}$ so the contour is a circle, making different $(r,s)$ pairs comparable.
\item \textbf{Variance vs $N$.} With fixed harmonics, the CLT variance $\mathrm{Var}(W)=\frac12\sum_r r(a_r^2+b_r^2)$ is $N$‑independent; the plot plateaus at this constant (dashed line) with SE bars $\hat\sigma^2\sqrt{2/(n-1)}$.
\item \textbf{Excess kurtosis vs $N$.} The trend toward $0$ (with error bars $\sqrt{24/n}$) matches the CLT/Wick picture; residual scatter reflects the estimator’s finite‑$n$ noise and small negative bias $\approx 6/(n+1)$.
\end{enumerate}

\paragraph*{Compact diagnostics that tie figures to theory.}
For a given choice of weights, define
\[
S_m=\sum_{r\le m} r\,(a_r^2+b_r^2),\qquad
T_m=\sum_{r\le m} r^{3/2}\, (|a_r|^3+|b_r|^3).
\]
Then $\mathrm{Var}(W)=\tfrac12 S_m$ (prediction for the dashed line in Fig.~\ref{fig:var-vs-N}),
while the dimensionless skewness proxy $T_m/S_m^{3/2}$ tracks the relative size of the cubic cumulant: it is small (and shrinks with $m$) when the Fourier weights decay fast (analytic symbols, within-phase quenches), and it grows in borderline/FH regimes. This explains why the Q--Q bulk is straight and why any curvature, when present, is confined to the extremes. More precisely, the ratio $\mathcal{T}_m/\mathcal{S}_m^{3/2}$ tracks the relative size of cubic cumulants: it $\to 0$ under fast decay (Gaussian-looking tails) and grows in borderline regimes. We define $T_m$ as the weighted $L^3$-norm of the mode coefficients (with a factor $r^{3/2}$ for convenience), and $S_m$ as the weighted $L^2$-norm. The ratio $T_m/S_m^{3/2}$ is dimensionless and serves as a proxy for the normalized third cumulant (skewness).\\ 

If all weight is on a single mode, this ratio would correspond exactly to the skewness of that mode’s distribution. When the ratio is small, it indicates the distribution is close to Gaussian (since higher cumulants are negligible). Figure \ref{fig:variance-kurtosis}(b) shows how the kurtosis trends to the Gaussian value as $N$ increases, confirming our CLT-based prediction that $P(W)$ becomes Gaussian in the large-$N$ limit.

\paragraph*{Summary.}
Across the histogram/Q--Q pairs (Figures 1,2 and 3), and variance/kurtosis trends (Figure 4) and scatter panels (Figure 5) the data agree with the random-matrix CLT for traces: a \emph{Gaussian bulk} with the variance set by the quadratic form $\sum_r r(a_r^2+b_r^2)$ and \emph{asymptotic independence} across different powers. Non-Gaussian features, when visible, appear where Szeg\H{o} hypotheses are weakened (many harmonics, slow decay) or in settings with genuine FH structure (jumps/zeros/edge singularities), in line with the discussion in Sec.~V.

%\FloatBarrier

\section{Conclusions and outlook}

We mapped the work distribution in a sudden quench onto a random matrix problem (the traces of Haar unitaries). The multivariate central limit theorem then predicts a Gaussian core to the distribution for a large system size, $N$, with a variance fully determined by the quench parameters \eqref{Var}. More precisely, this is determined by the Fourier components of the post-quench dispersion. The coefficients, $\alpha_m$, come from expanding the single-particle dispersion, $\varepsilon(k)$, into cosine and sine harmonics, as given in equation \eqref{modes}. Note that, in contrast to summing $N$ independent random variables, the large-$N$ limit of the unitary group plays the role of the thermodynamic limit here, yielding an intrinsic variance of order $m$ rather than vanishing with $N$.\\

This is a direct physical consequence of the fact that the central–limit behavior we exploit here is of a
\emph{random–matrix} type rather than the textbook i.i.d.\ kind. For Haar $U(N)$,
the random variables $\Tr U^r$ have $O(1)$ variances (no $1/\sqrt{N}$ normalization), different
powers become asymptotically independent by unitary orthogonality, and the relevant
regularity condition is of Szeg\H{o} type \cite{johansson1997random} (summability of $r(a_r^2+b_r^2)$), not Lindeberg’s. These features lead to error controls and growth windows (e.g. total‑variation
CLTs up to $m(N)=o(N^{2/3})$ and super‑exponential accuracy if $m\ll\sqrt{N}$ \cite{JohanssonLambert2021}) that are quite unlike the Berry–Esseen picture for sums of i.i.d.\ variables; see also \cite{DiaconisShahshahani1994} for finite‑$N$ trace moments and orthogonality, \cite{DiaconisEvans2001} for the linear statistics CLTs, and the Gibbs lecture by Diaconis \cite{diaconis2003patterns} for context.\\

We identify how deviations (non-Gaussian tails) arise when the conditions of the CLT are weakened (multiple slow-decaying modes, symbol singularities, etc.), and we confirm these predictions with both analytic results (Toeplitz determinant asymptotics) and numerical simulations. This Gaussian ‘core’ of the work distribution, with model-specific non-Gaussian tails, suggests clear experimental signatures – e.g. the prevalence of Gaussian statistics in sudden quenches unless the system is tuned to special conditions (long-range interactions or critical points) that produce visible deviations.\\

Therefore, a natural next step is to drive the system across quantum critical manifolds (or to open boundaries) and ask how the \emph{tails} of the work distribution reorganize. In our mapping, the work characteristic function is the Loschmidt amplitude $\chi_W(u)=G(-u)$. When $G(t)$ develops \emph{Fisher zeros} that approach the real time axis, the strip of analyticity of $\chi_W(u)$ narrows, and by standard Fourier arguments this slows the decay of $P(W)$—revealing structured, non‑Gaussian tails. In parallel, when the (block)‑Toeplitz symbol acquires \emph{Fisher–Hartwig} features (jumps/zeros) at criticality or at hard edges (OBC), one expects robust algebraic/oscillatory fingerprints in the tails while the \emph{bulk} remains close to Gaussian. Charting these effects side‑by‑side—within‑phase vs across‑critical quenches—would provide a clean, experiment‑ready test of our Gaussian core with FH/DQPT‑controlled tails picture. In particular, we already have computations on the open boundary case with the large $u$ (tails) asymptotics of the Toeplitz/Hankel determinant evaluated through Painlevé V: the expansion contains explicit power‑laws and oscillations that will give non‑Gaussian tails controlled by the nearest PV singularity to the real $u$ axis. This will be discussed elsewhere.\\

Likewise, by tuning the $\{a_r\}$ and probing the large‑deviation regime of the matrix model (the analogue of Gross–Witten–Wadia transition \cite{perez2024dynamical}, a third-order phase transition in unitary matrix models, and its multicritical extensions \cite{le2018multicritical}), one can further reorganize the \emph{tails} of $P(W)$ in universal ways without changing the near‑Gaussian bulk—connecting quench thermodynamics to well‑studied unitary‑matrix universality classes.

\section*{Acknowledgements}
We thank Gregory Schehr for correspondence.

\label{sec:numerics}
%\subsection*{Three-mode (no pairing)}

\begin{figure*}[t]
  \centering

  % ----- LEFT COLUMN: 3-mode -----
  \begin{minipage}[t]{0.49\textwidth}
    \subfloat[3-mode: Histogram\label{fig:3m-hist}]{%
      \includegraphics[width=\linewidth]{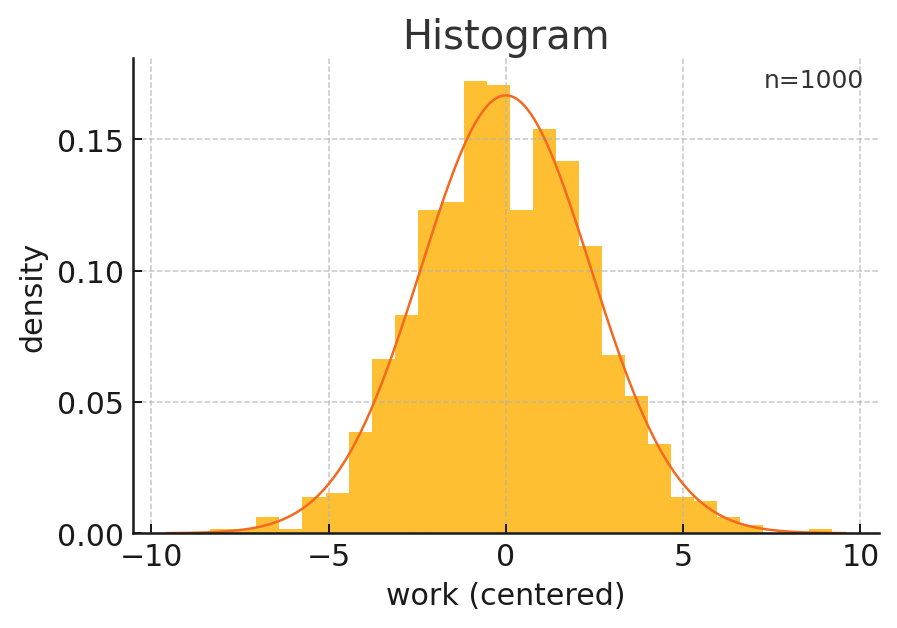}}\\[2pt]
    \subfloat[3-mode: Q--Q\label{fig:3m-qq}]{%
      \includegraphics[width=\linewidth]{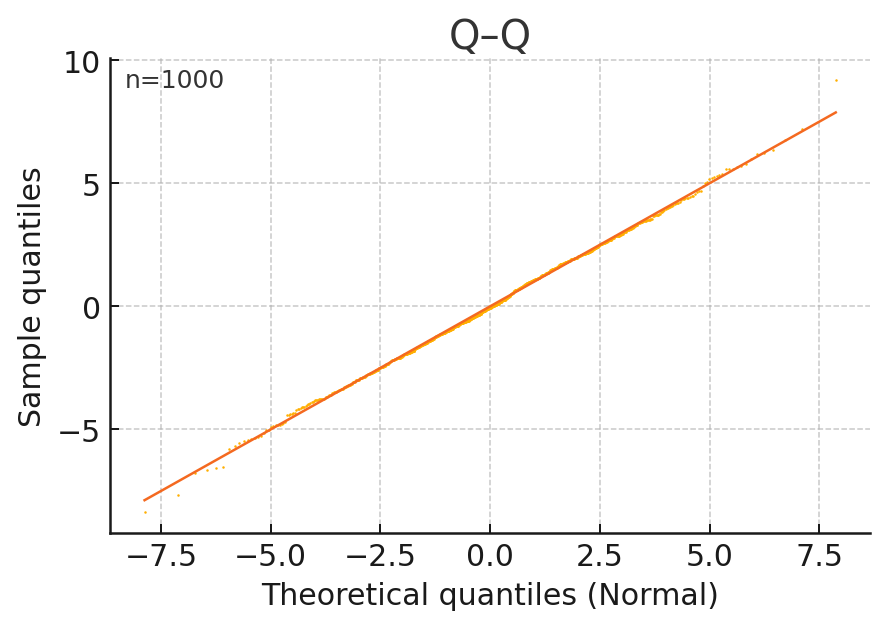}}\\[2pt]
    \subfloat[3-mode: Detrended Q--Q\label{fig:3m-qq-det}]{%
      \includegraphics[width=\linewidth]{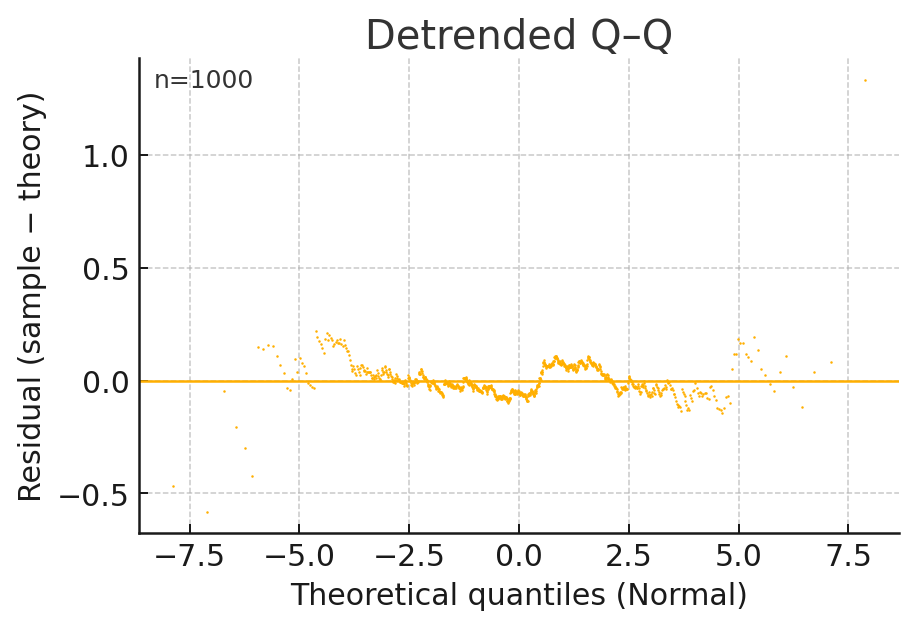}}
  \end{minipage}\hfill
  % ----- RIGHT COLUMN: 5-mode even -----
  \begin{minipage}[t]{0.49\textwidth}
    \subfloat[5-mode even: Histogram\label{fig:5m-hist}]{%
      \includegraphics[width=\linewidth]{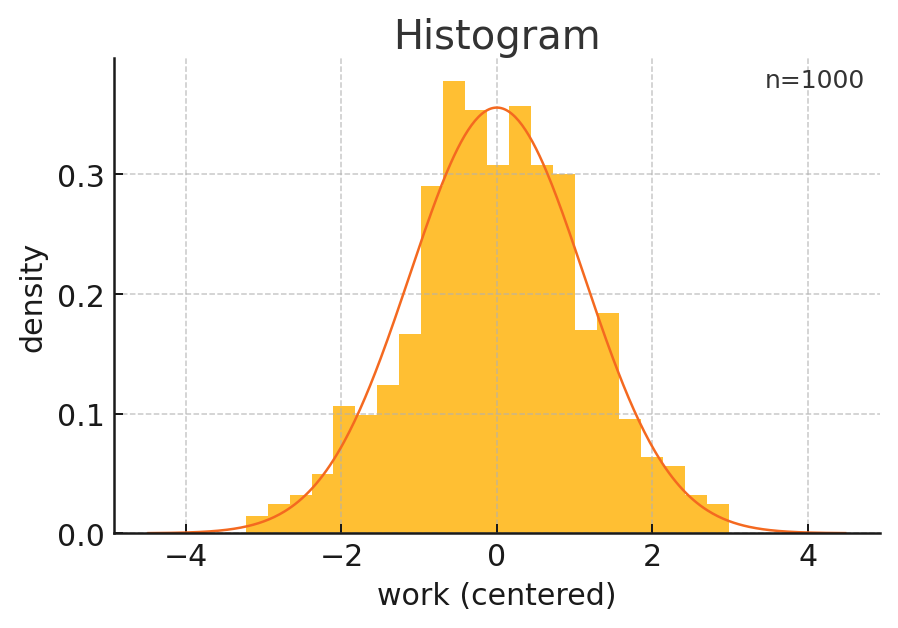}}\\[2pt]
    \subfloat[5-mode even: Q--Q\label{fig:5m-qq}]{%
      \includegraphics[width=\linewidth]{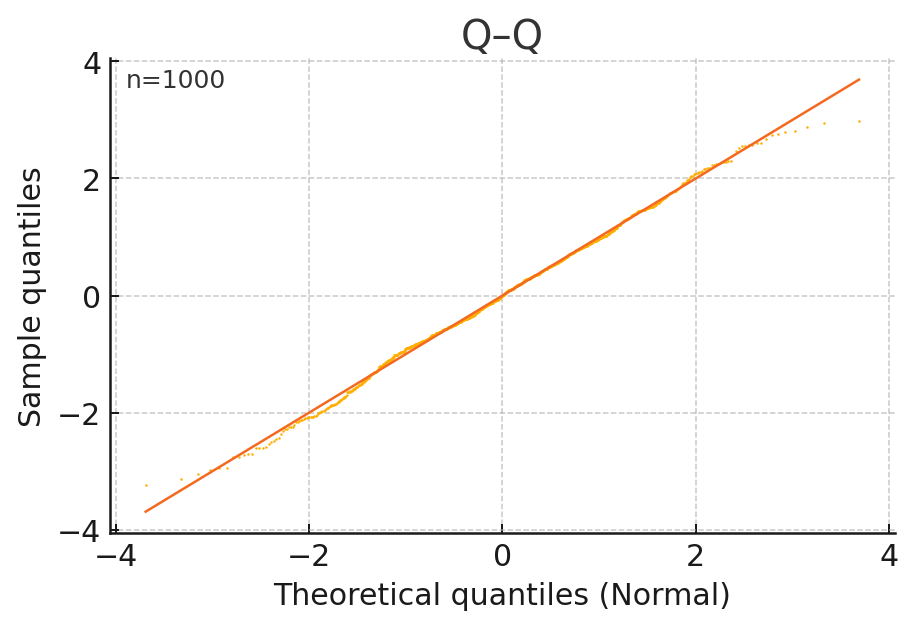}}\\[2pt]
    \subfloat[5-mode even: Detrended Q--Q\label{fig:5m-qq-det}]{%
      \includegraphics[width=\linewidth]{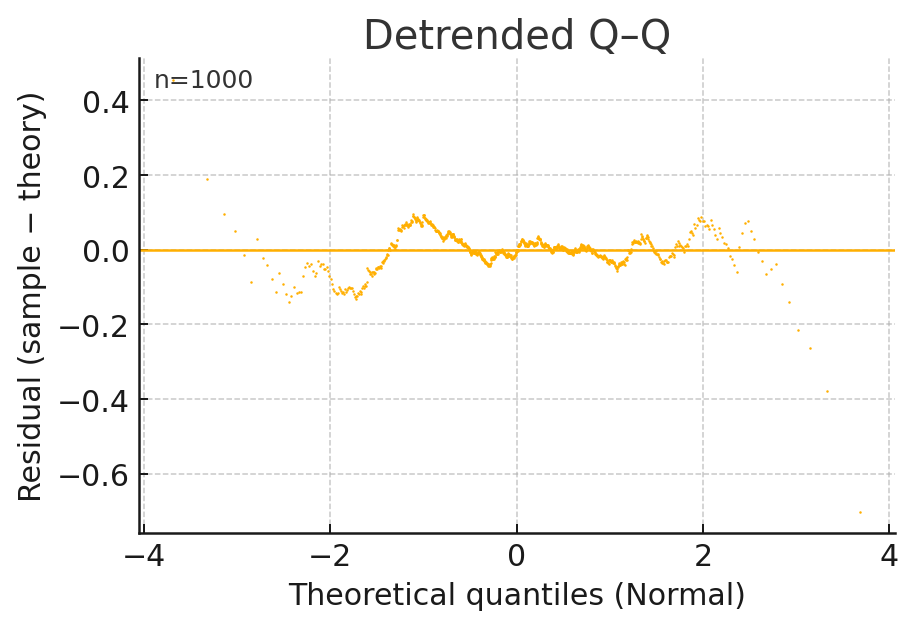}}
  \end{minipage}

  \caption{Side-by-side diagnostics for two families at \(N=80\), \(n=1000\) samples.
  Left: 3-mode model \(\mathcal M=\{1,2,3\}\) with \(\alpha_1=1.0,\alpha_2=0.7,\alpha_3=0.5\).
  Right: 5-mode even model \(\mathcal M=\{2,4,6,8,10\}\), \(\alpha_m=e^{-0.4m}\).
  Each column shows the histogram with Gaussian overlay, the Q--Q plot (points + \(y=x\)),
  and the detrended Q--Q (residuals vs theory).}
  \label{fig:grid-3m-5m}
\end{figure*}

\begin{figure*}[t]
  \centering
  \begin{minipage}[t]{0.49\textwidth}
    \subfloat[Power-law: Histogram\label{fig:pl-hist}]{%
      \includegraphics[width=\linewidth]{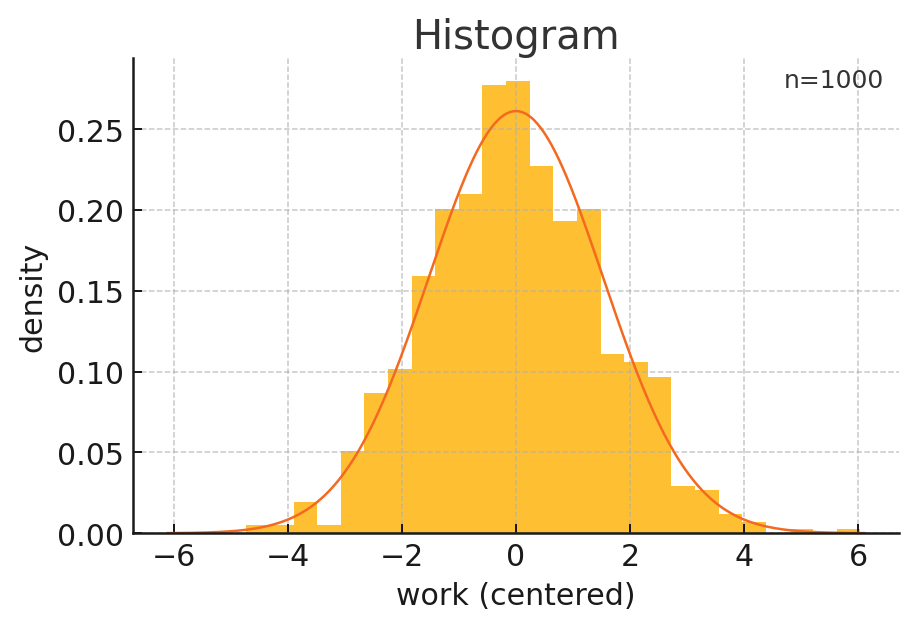}}\\[2pt]
    \subfloat[Power-law: Q--Q\label{fig:pl-qq}]{%
      \includegraphics[width=\linewidth]{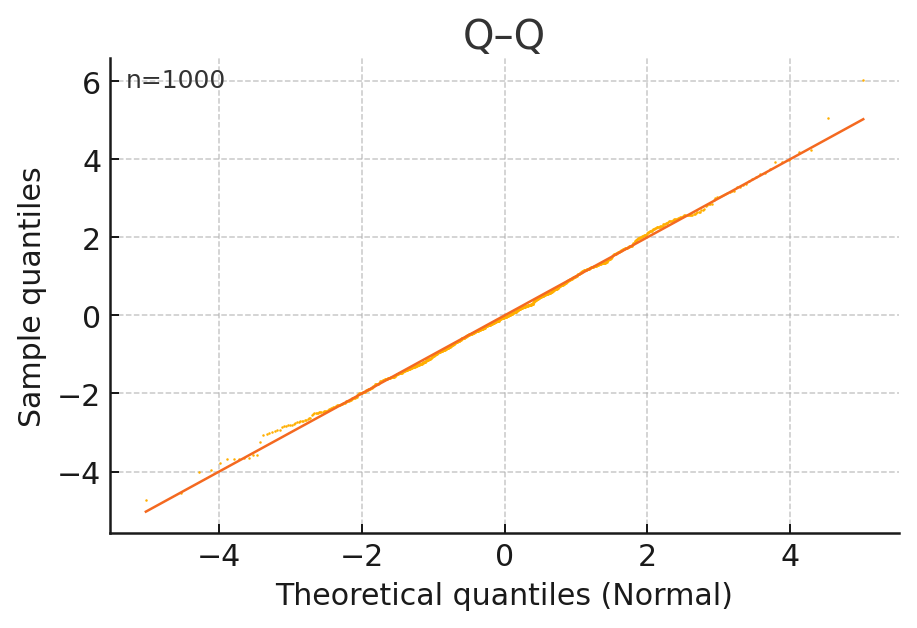}}\\[2pt]
    \subfloat[Power-law: Detrended Q--Q\label{fig:pl-qq-det}]{%
      \includegraphics[width=\linewidth]{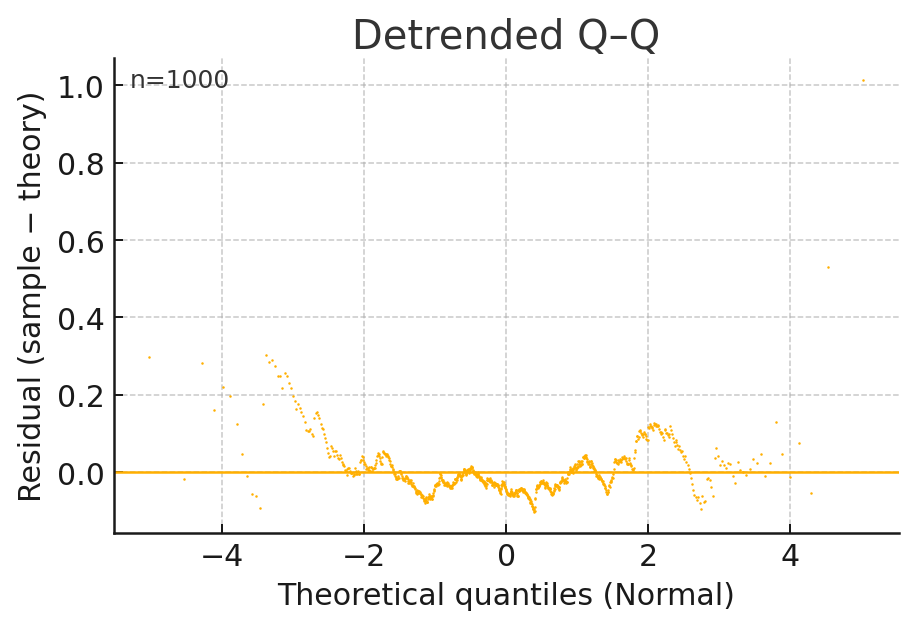}}
  \end{minipage}\hfill
  \begin{minipage}[t]{0.49\textwidth}
    \subfloat[Exponential even: Histogram\label{fig:exp-hist}]{%
      \includegraphics[width=\linewidth]{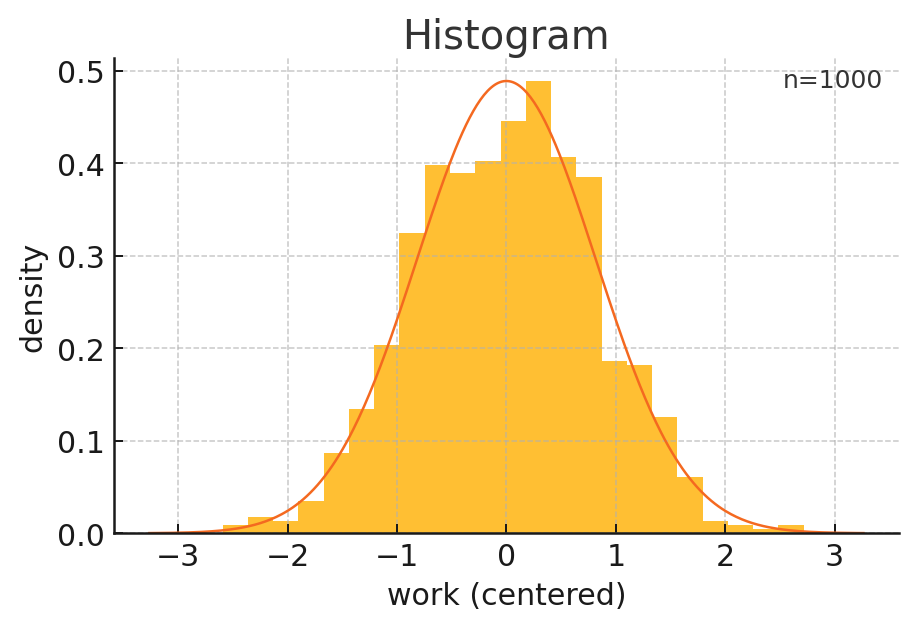}}\\[2pt]
    \subfloat[Exponential even: Q--Q\label{fig:exp-qq}]{%
      \includegraphics[width=\linewidth]{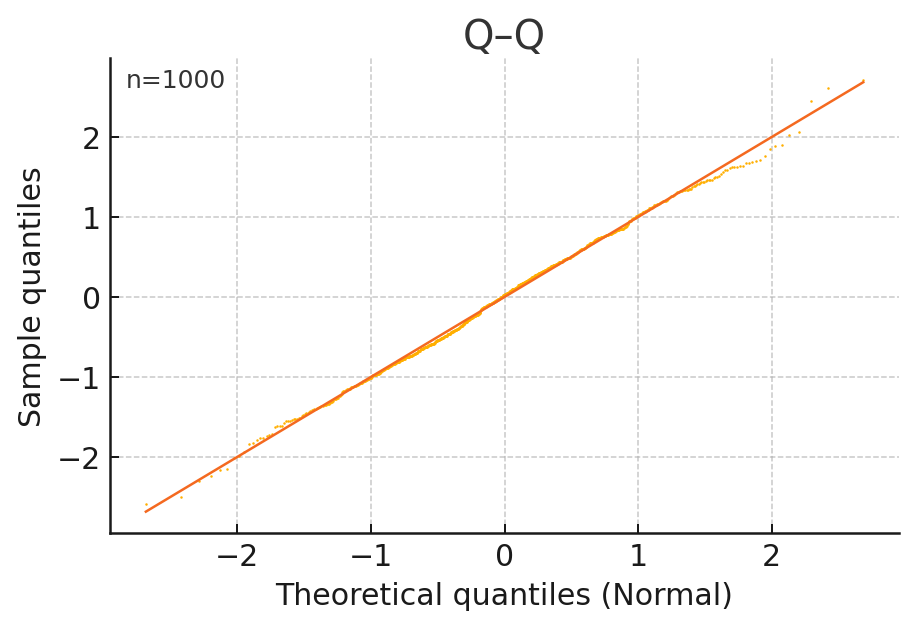}}\\[2pt]
    \subfloat[Exponential even: Detrended Q--Q\label{fig:exp-qq-det}]{%
      \includegraphics[width=\linewidth]{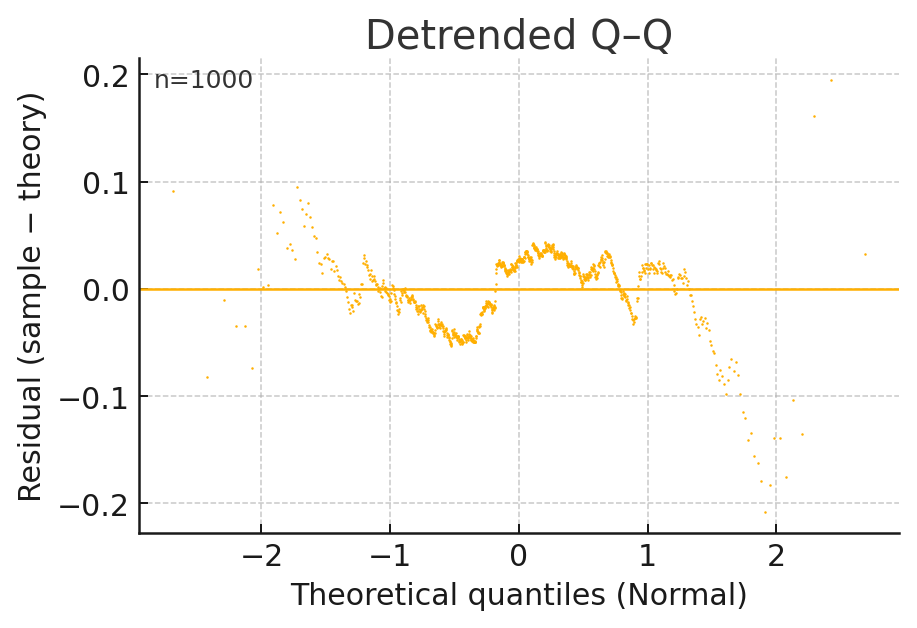}}
  \end{minipage}
  \caption{Diagnostics for power-law (left; \(\alpha_m=m^{-2}\)) and exponential even (right; \(\alpha_m=e^{-0.5m}\)) families at \(N=80\), \(n=1000\) samples.}
  \label{fig:grid-pl-exp}
\end{figure*}

\begin{figure*}[t]
  \centering

  % ----- LEFT COLUMN: 3-mode (surrogate vs Haar) -----
  \begin{minipage}[t]{0.485\textwidth}
    \subfloat[Hist.\ (Haar vs sur.)\label{fig:S1-3mode-hist}]{%
      \includegraphics[width=\linewidth]{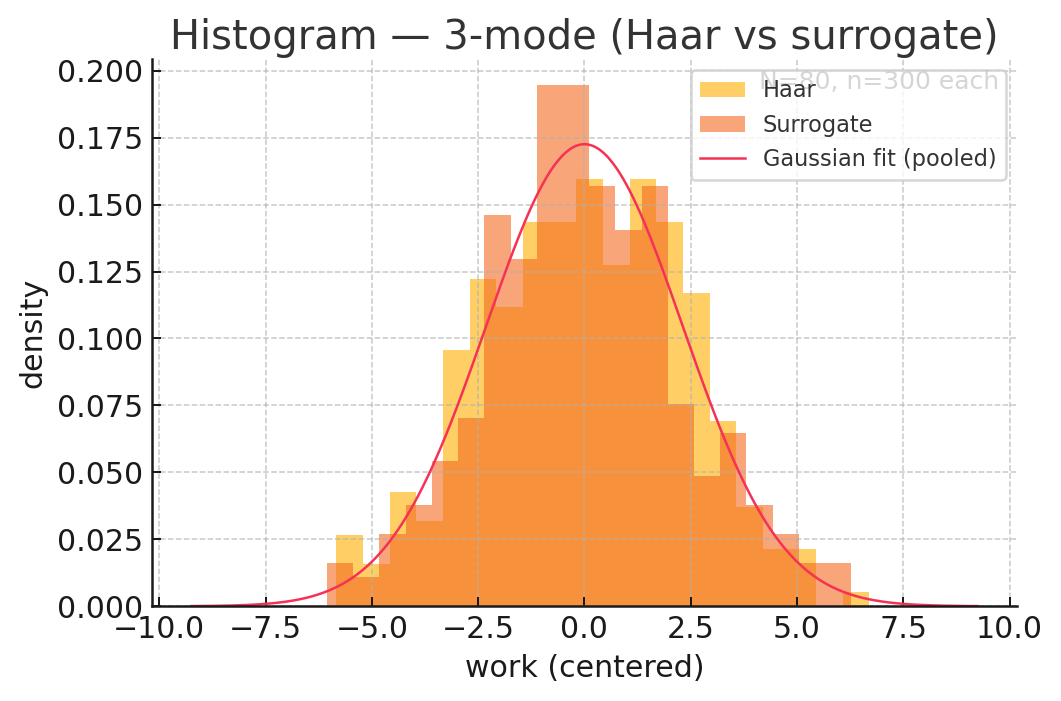}}\\[2pt]
    \subfloat[Q--Q (Haar)\label{fig:S1-3mode-qq-haar}]{%
      \includegraphics[width=\linewidth]{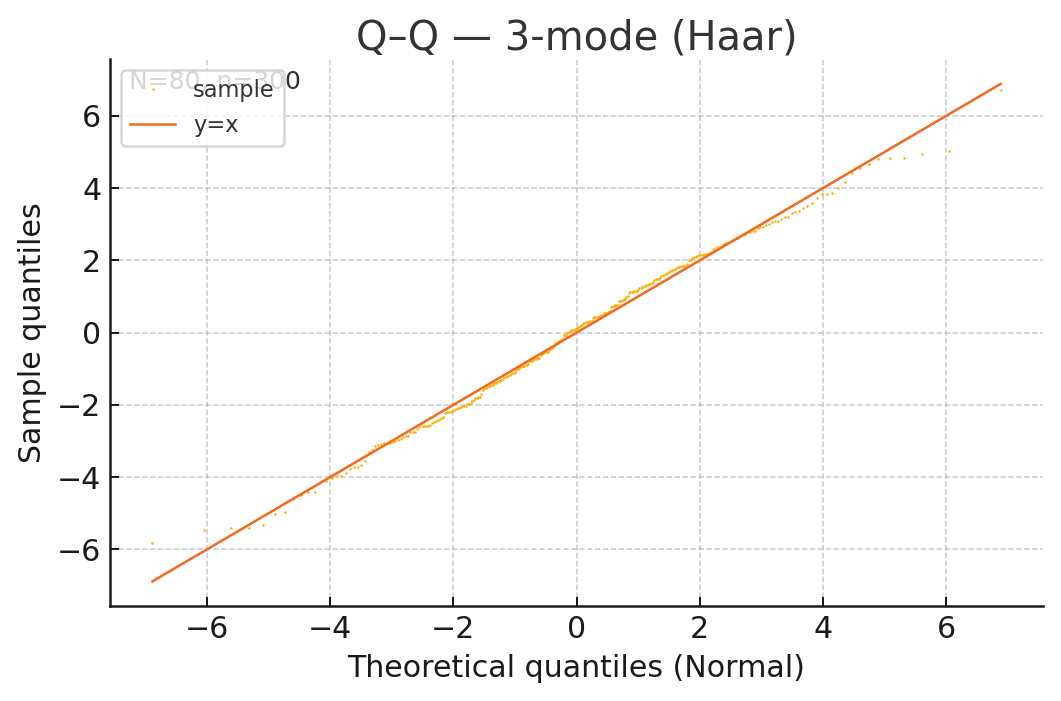}}\\[2pt]
    \subfloat[Q--Q (Sur.)\label{fig:S1-3mode-qq-sur}]{%
      \includegraphics[width=\linewidth]{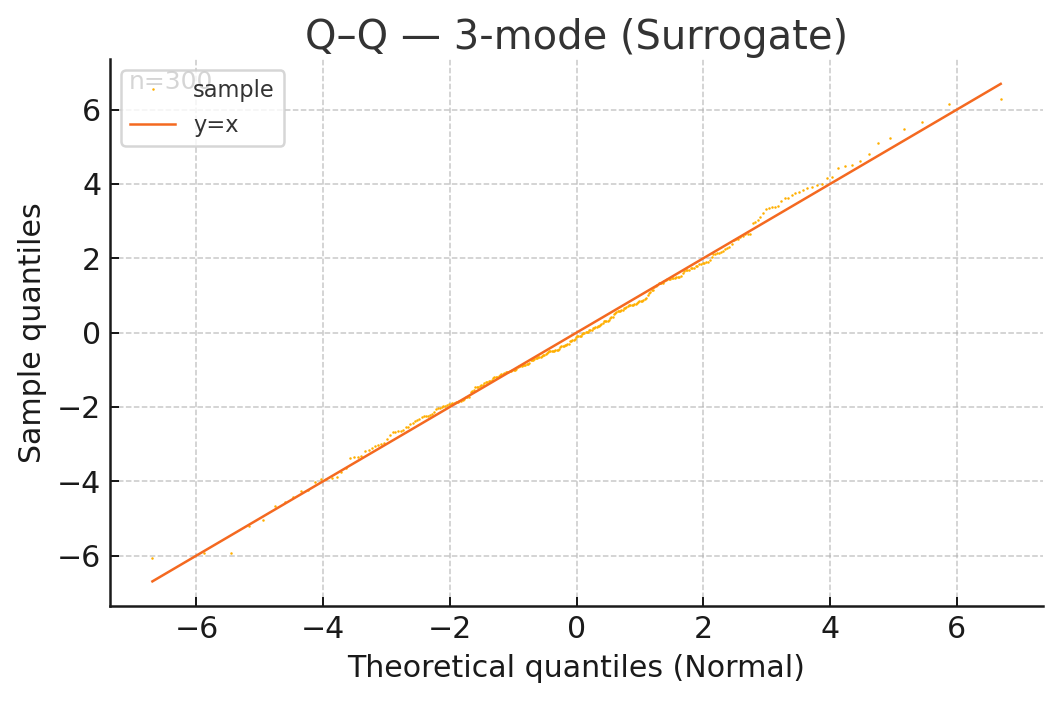}}
  \end{minipage}\hfill%
  % ----- RIGHT COLUMN: 5-mode even (surrogate vs Haar) -----
  \begin{minipage}[t]{0.485\textwidth}
    \subfloat[Hist.\ (Haar vs sur.)\label{fig:S1-5mode-hist}]{%
      \includegraphics[width=\linewidth]{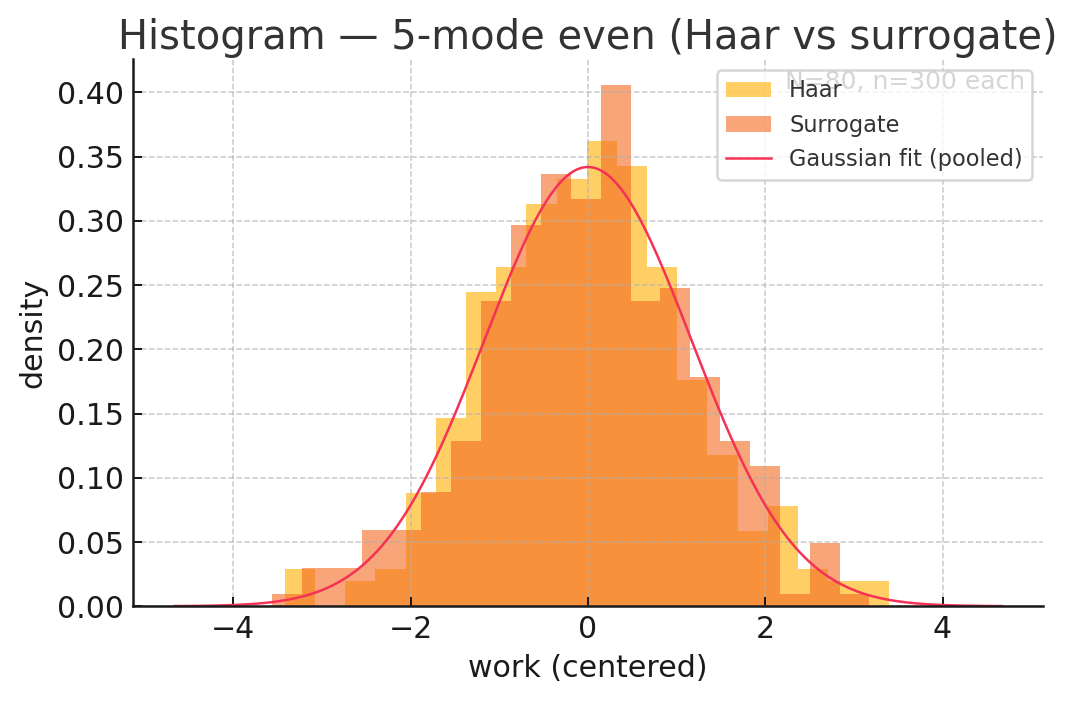}}\\[2pt]
    \subfloat[Q--Q (Haar)\label{fig:S1-5mode-qq-haar}]{%
      \includegraphics[width=\linewidth]{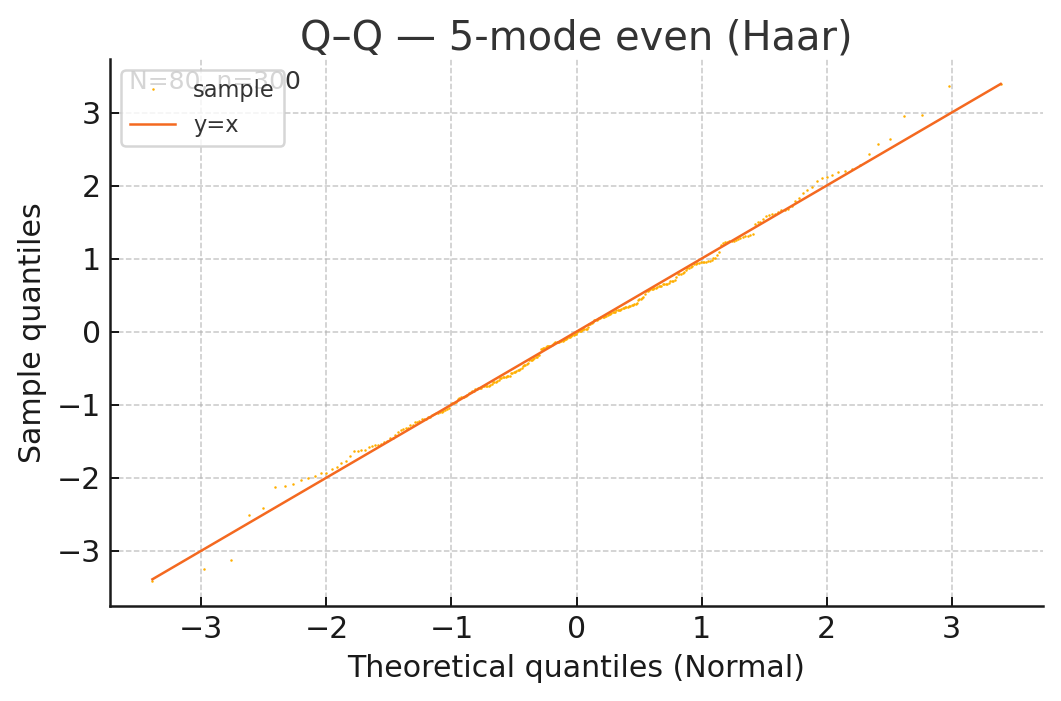}}\\[2pt]
    \subfloat[Q--Q (Sur.)\label{fig:S1-5mode-qq-sur}]{%
      \includegraphics[width=\linewidth]{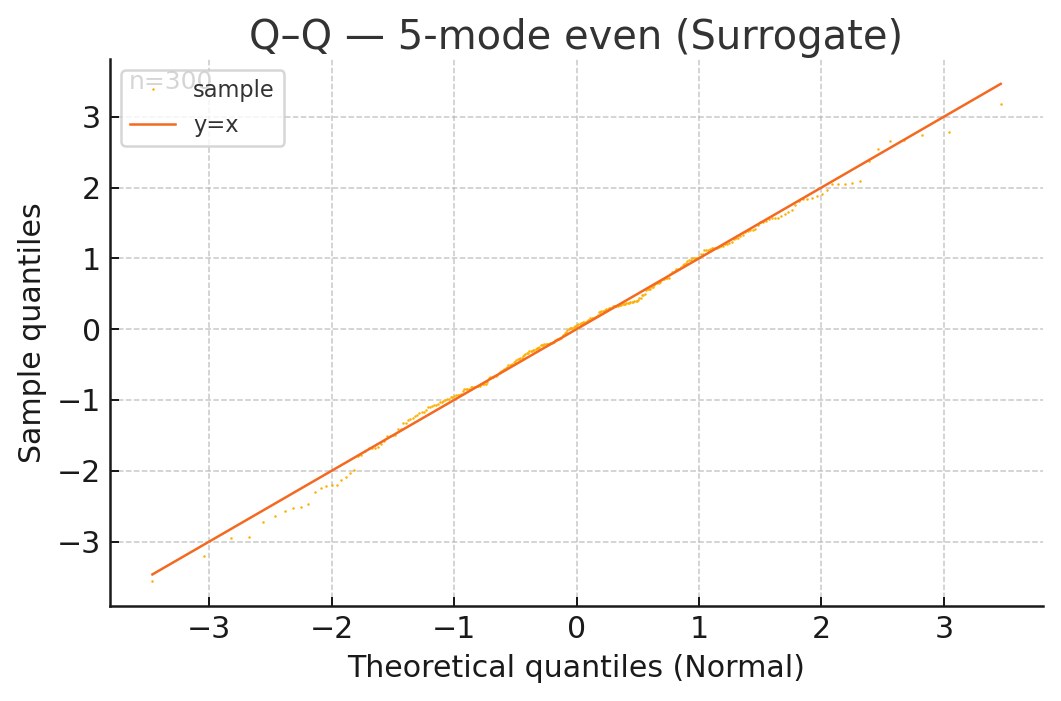}}
  \end{minipage}

  \caption{Surrogate vs exact Haar at \(N=80\), \(n=300\) per sample.
  Left: 3‑mode \(\mathcal{M}=\{1,2,3\}\), \(\alpha_1=1.0,\alpha_2=0.7,\alpha_3=0.5\).
  Right: 5‑mode even \(\mathcal{M}=\{2,4,6,8,10\}\), \(\alpha_m=e^{-0.4m}\).
  Rows: (top) histogram overlays (Freedman–Diaconis bins; pooled Gaussian fit),
  (middle) Q--Q (Haar), (bottom) Q--Q (surrogate).}
  \label{fig:grid-haar-surrogate}
\end{figure*}

\begin{figure*}[t]
  \centering
  \subfloat[Variance vs \(N\).]{%
    \includegraphics[width=0.46\textwidth]{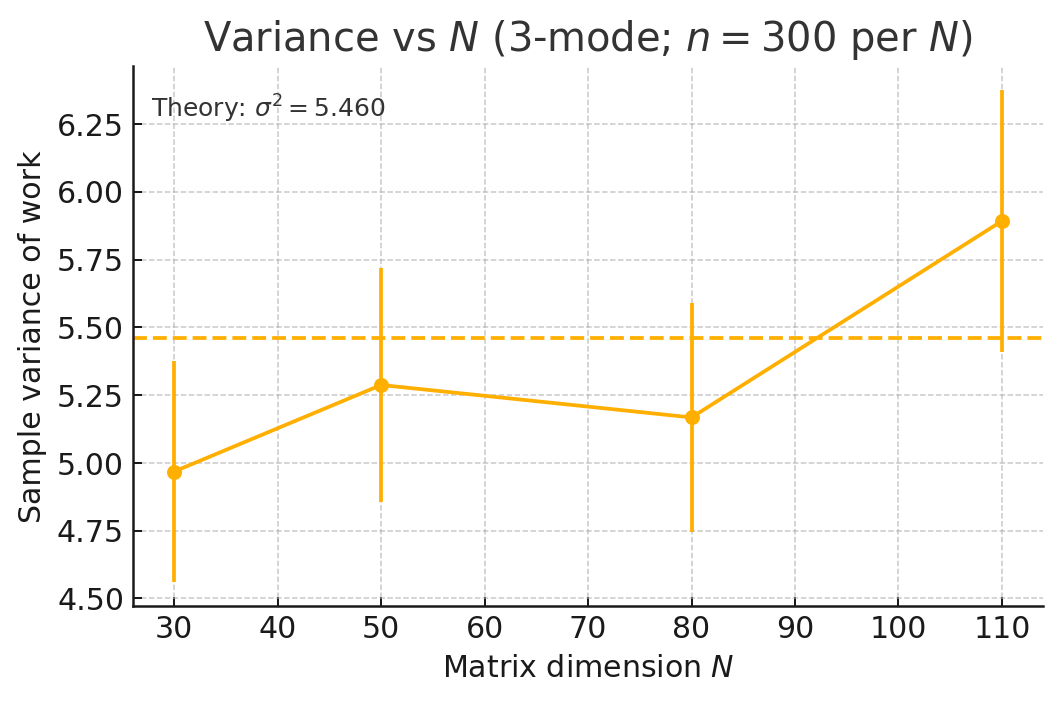}%
    \label{fig:var-vs-N}}
  \hfill
  \subfloat[Excess kurtosis vs \(N\).]{%
    \includegraphics[width=0.46\textwidth]{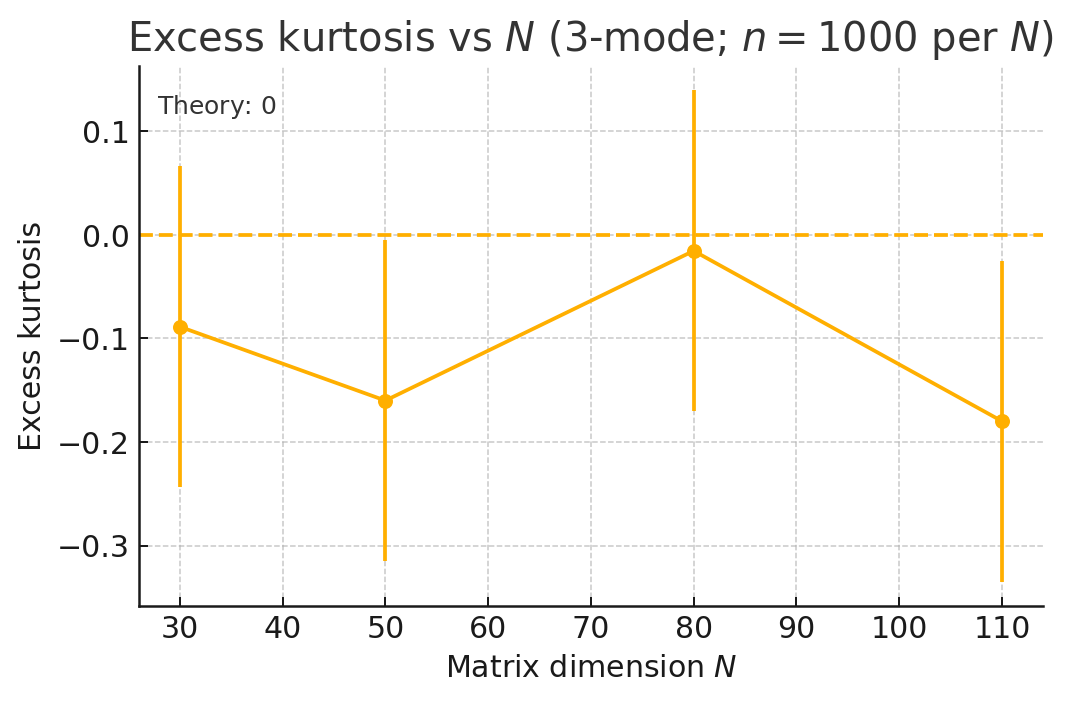}%
    \label{fig:kurtosis-vs-N}}
  \caption{Diagnostics across \(N\) for the 3‑mode model \(\mathcal M=\{1,2,3\}\), \(\alpha_1=1.0\), \(\alpha_2=0.7\), \(\alpha_3=0.5\).
  (a) Sample variance with analytic prediction (dashed) and SE bars \(\hat\sigma^2\sqrt{2/(n-1)}\) (\(n=300\) per \(N\)).
  (b) Excess kurtosis with baseline error bars \(\sqrt{24/n}\) (\(n=1000\) per \(N\)); dashed line marks the Gaussian limit.}
  \label{fig:variance-kurtosis}
\end{figure*}

\begin{figure*}[t]
  \centering
  \subfloat[\(\Re\,\mathrm{Tr}\,U\) vs \(\Re\,\mathrm{Tr}\,U^{2}\).]{%
    \includegraphics[width=0.46\textwidth]{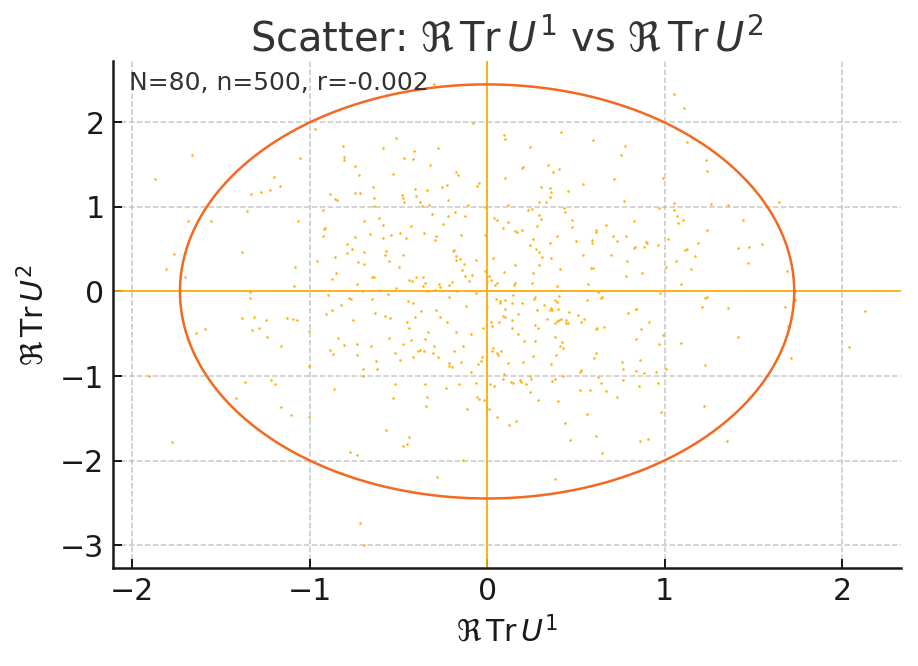}%
    \label{fig:scatter12}}
  \hfill
  \subfloat[\(\Re\,\mathrm{Tr}\,U\) vs \(\Re\,\mathrm{Tr}\,U^{3}\).]{%
    \includegraphics[width=0.46\textwidth]{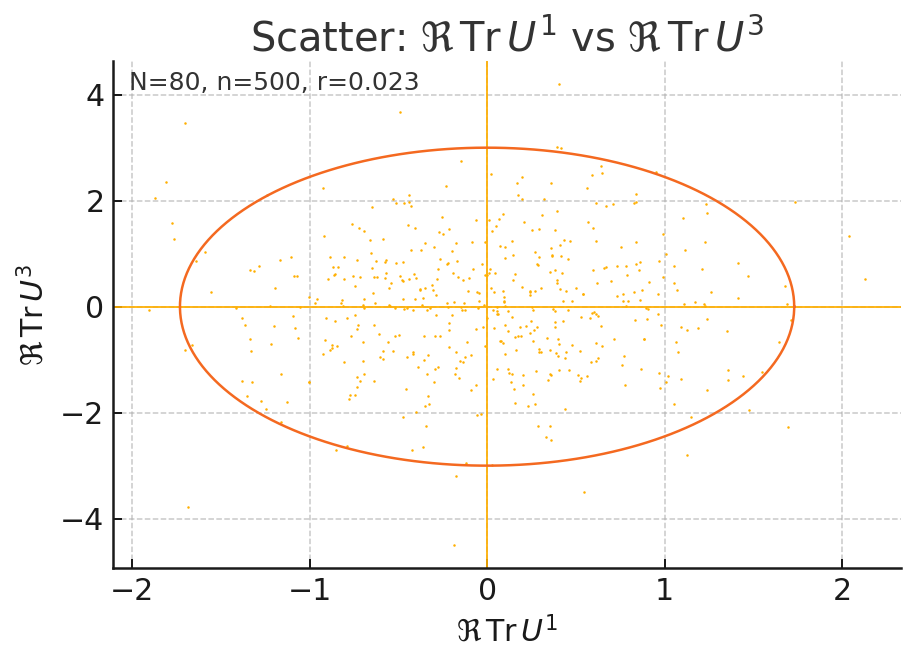}%
    \label{fig:scatter13}}\\[3pt]
  \subfloat[Standardized: \(m=1\) vs \(m=2\).]{%
    \includegraphics[width=0.46\textwidth]{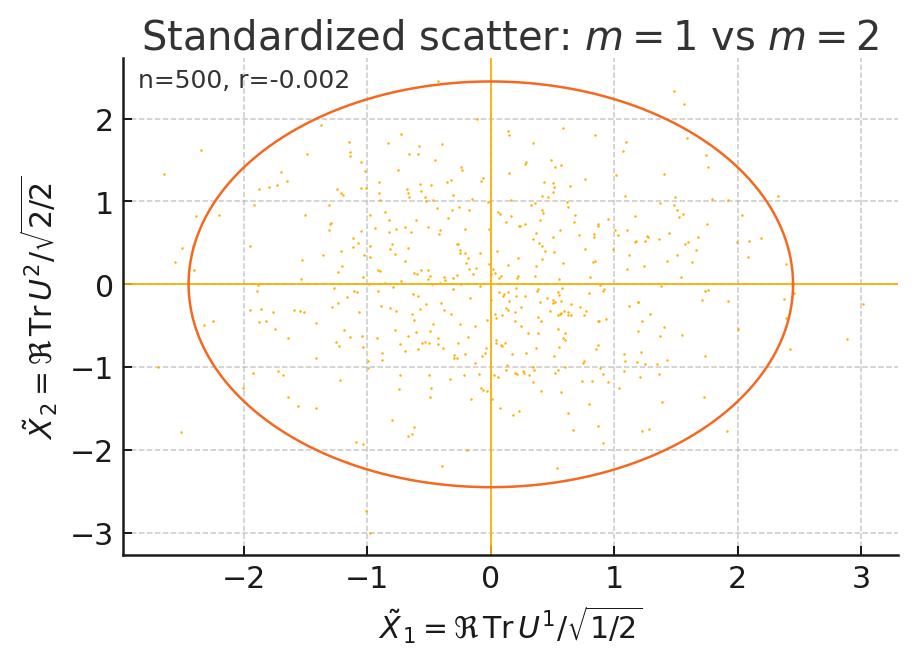}%
    \label{fig:scatter12-std}}
  \hfill
  \subfloat[Standardized: \(m=1\) vs \(m=3\).]{%
    \includegraphics[width=0.46\textwidth]{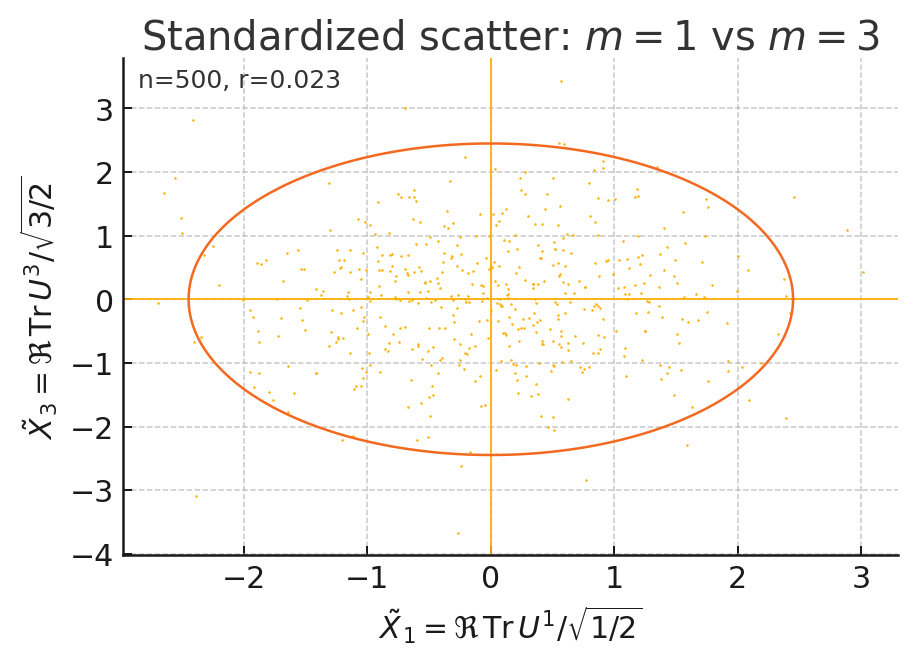}%
    \label{fig:scatter13-std}}
  \caption{Scatter diagnostics with theory 95\% contours (ellipse/circle). Top: raw coordinates show the \((\sqrt{m/2})\) scale on each axis; bottom: standardized coordinates divide by \(\sqrt{m/2}\) so the 95\% contour is a circle, making different \((m,n)\) pairs directly comparable. Annotations in the panels report the sample correlation \(r\) (theory: \(r=0\)); here \(n=500\) samples.}
  \label{fig:scatter-grid}
\end{figure*}

% ===== end subsection =====

%\bibliographystyle{unsrtnat}   % shows article titles
%\bibliography{refs}

%\bibliographystyle{apsrev4-2-title}
%\bibliography{refs}

\bibliographystyle{apsrev4-2}
\bibliography{refs}

\end{document}